\DeclareFontFamily{OT1}{pzc}{}
\DeclareFontShape{OT1}{pzc}{m}{it}{<-> s * [1.10] pzcmi7t}{}
\DeclareMathAlphabet{\mathpzc}{OT1}{pzc}{m}{it}
\newtheorem{theorem}{Theorem}[section]
\newtheorem{lemma}[theorem]{Lemma}
\newtheorem{remark}[theorem]{Remark}
\providecommand{\Order}{\mathbb{O}}
\providecommand{\R}{\mathbb{R}}
\providecommand{\E}{\mathbb{E}} 
\providecommand{\SO}{\mathbf{SO}}
\providecommand{\SE}{\mathbf{SE}}
\providecommand{\gothX}{\mathfrak{X}} 
\providecommand{\calC}{\mathcal{C}}
\providecommand{\calJ}{\mathcal{J}}
\providecommand{\calM}{\mathcal{M}}
\providecommand{\calN}{\mathcal{N}}
\providecommand{\calR}{\mathcal{R}}
\providecommand{\vecL}{\mathbb{L}}
\providecommand{\tT}{\mathrm{T}} 
\providecommand{\GP}{\mathbf{N}} 
\DeclareMathOperator{\Exp}{Exp}
\DeclareMathOperator{\tr}{tr}
\DeclareMathOperator{\diag}{diag}
\providecommand{\PT}{\mathbf{P}} 
\providecommand{\td}{\mathrm{d}}
\providecommand{\tD}{\mathrm{D}}
\providecommand{\ddt}{\frac{\td}{\td t}}
\providecommand{\scirc}{%
    \hbox{\fontfamily{\rmdefault}\fontsize{0.4\dimexpr(\f@size pt)}{0}\selectfont{\raisebox{-0.52ex}[0ex][-0.52ex]{$\circ$}}}}
\providecommand{\ucirc}{%
    \hbox{\fontfamily{\rmdefault}\fontsize{0.4\dimexpr(\f@size pt)}{0}\selectfont{\raisebox{0.0ex}[0ex][-0.52ex]{$\circ$}}}}
\mathchardef\mhyphen="2D
\providecommand{\Order}{\mathbf{O}} 
\providecommand{\order}{\mathbf{o}} 
\providecommand{\etal}{\textit{et al.~}}
\DeclareSymbolFont{stixletters}{LS1}{stix}{m}{it}
\DeclareMathAccent{\cev}{\mathord}{stixletters}{"91}
\DeclareMathAccent{\vec}{\mathord}{stixletters}{"92}
\DeclareMathAccent{\vecev}{\mathord}{stixletters}{"95}
\providecommand{\xiop}{\check{\xi}_{k+1}}
\providecommand{\dds}{\frac{\td}{\mathrm{ds}}}
\newcommand{\bR}{\mathbf{R}}
\providecommand{\bv}{\mathbf{v}}
\providecommand{\bp}{\mathbf{p}}
\providecommand{\jp}[3]{J^{#1}_1({#2},#3)}
\providecommand{\jt}[3]{J^{#1}_2({#2},#3)}
\renewcommand{\Order}{\mathbf{O}}
\title{\LARGE \bf The Geometry of Extended Kalman Filters on Manifolds\\ with Affine Connection}
\author{
    \href{https://orcid.org/0000-0001-7969-7039}{\includegraphics[scale=0.06]{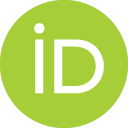}\hspace{1mm}
Yixiao Ge}
\\
    Systems Theory and Robotics Group \\
    School of Engineering \\
	Australian National University \\
    ACT, 2601, Australia \\
    \texttt{Yixiao.Ge@anu.edu.au} \\
\And    \href{https://orcid.org/0000-0003-4391-7014}{\includegraphics[scale=0.06]{orcid.png}\hspace{1mm}
Pieter van Goor}
\\
    Robotics and Mechatronics (RaM) Group \\
    EEMCS Faculty \\
	University of Twente \\
    Enschede, The Netherlands \\
    \texttt{p.c.h.vangoor@utwente.nl} \\
	\And	\href{https://orcid.org/0000-0002-7803-2868}{\includegraphics[scale=0.06]{orcid.png}\hspace{1mm}
    Robert Mahony}
\\
    Systems Theory and Robotics Group \\
    School of Engineering \\
	Australian National University \\
    ACT, 2601, Australia \\
	\texttt{Robert.Mahony@anu.edu.au} \\
}
\begin{document}
\maketitle

\begin{abstract}
The extended Kalman filter (EKF) has been the industry standard for state estimation problems over the past sixty years.
The classical formulation of the EKF is posed for nonlinear systems defined on global Euclidean spaces.
The design methodology is regularly applied to systems on smooth manifolds by choosing local coordinates, however, it is well known that this approach is not intrinsic to the manifold and performance depends heavily on choosing ``good''  coordinates.
In this paper, we propose an extended Kalman filter that is adapted to the specific geometry of the manifold in question.
We show that an affine connection and the concepts of parallel transport, torsion, and curvature are the key geometric structures that allow the formulation of a suitable family of intrinsic Gaussian-like distributions and provide the tools to understand how to propagate state estimates and fuse measurements.
This leads us to propose novel geometric modifications to the propagation and update steps of the EKF and revisit recent work on the geometry of the reset step.
The relative performance of the proposed geometric modifications are benchmarked against classical EKF and iterated EKF algorithms on a simplified inertial navigation system with direct pose measurements and no bias.
\end{abstract}

\section{Introduction}
\label{sec:introduction}

The extended Kalman filter (EKF) has been the industry standard non-linear state estimation algorithm for the past sixty years \cite{maybeck1982stochastic,Anderson2012,barfoot2017state}.
The original Kalman filter \cite{kalman1960new} was extended by Smith \etal \cite{smith1962application,mcgee1985discovery} to non-linear systems using local linearisations, introducing the now standard prediction-update form of the filter, and defining the reset step to ensure that the linearisations are computed at the most recent estimate of the state.
Since its creation, many modifications of the EKF have been proposed to handle system nonlinearity better.
The iterated EKF (ItEKF) \cite{gelb1974applied} repeatedly re-linearises the measurement function around temporary posteriors in order to minimise the linearisation error in the update step.
High-order EKFs were developed by considering the higher-order terms in the Taylor series when linearising the system functions \cite{jazwinski2007stochastic}.
The robust EKF (REKF) uses $\mathbf{H}_\infty$ techniques to design filter updates that are robust to higher-order terms in the Taylor series \cite{einicke1999robust}.
The unscented Kalman filter (UKF) \cite{julier2004unscented} approaches the problem from a more stochastic perspective and uses the unscented transformation to model the propagation of non-Gaussian distributions and estimate the joint state-measurement distribution in the update step.

Although the classical formulation of the EKF was posed on global Euclidean state space, the state space for the original application of EKF, to attitude estimation in the Apollo missions \cite{smith1962application}, was the special orthogonal group.
While any choice of local coordinates provides a representation in which the EKF can be implemented, from as early as the 1970s, authors were demonstrating the advantage of choosing local coordinate charts that encode geometric structure \cite{duncan1977some,lefferts1982kalman,ng1985nonlinear}.
A natural class of manifolds to consider are Riemannian manifolds \cite{duncan1977some,ng1985nonlinear} equipped with a Riemannian metric.
This structure is closely related to the $\boxplus$ (`boxplus') and $\boxminus$ (`boxminus') operators defined in \cite{hertzberg2013integrating} and used in implementations of the UKF \cite{clemens2016extended,Barrau_2017,hauberg2013unscented} and iterated EKF on manifolds \cite{bloesch2017iterated}.
The Levi-Civita connection, the unique torsion free connection that preserves the Riemannian metric, provides structure to study diffusion processes \cite{duncan1977some,said2013filtering} and to formulate sigma point constellations for the UKF \cite{hauberg2013unscented} in intrinsic coordinates.
A separate class of manifolds that comprise many core applications for the EKF are Lie-groups and homogeneous spaces.
Apart from the special orthogonal group $\SO(n)$ that admits a bi-invariant Riemannian metric, most Lie-groups and homogeneous spaces do not admit an intrinsic Riemannian metric \cite{nomizu1954invariant}.
However, the group structure itself has been extensively exploited in algorithm design.
The quaternion representation for attitude was used to define the multiplicative EKF (MEKF) \cite{lefferts1982kalman} exploiting a local group error \cite[Equation (118)]{lefferts1982kalman} that leads to a ``nice'' structure for linearisation of the attitude state~\cite{markley2003attitude} and is a form of error state Kalman filter \cite{1999_Roumeliotis}.
Lie-groups also have an intrinsic coordinate structure given by the exponential or one-parameter subgroup curves.
Chirikjian \etal~\cite{wang2006error,long2013banana} showed that for left invariant kinematics on Lie-groups the linearised propagation in the error state is exact.
Bonnabel \etal developed the theory of the invariant extended Kalman filter (IEKF)~\cite{Bonnabel_2007,Bonnabel_2009} that exploits this property for a class of systems on Lie-groups termed ``group affine''.
Mahony \etal~\cite{mahony2020equivariant} studied equivariant systems on homogeneous spaces leading to the Equivariant Filter (EqF) \cite{van2022equivariant,ge2022equivariant}.

In this work, we consider nonlinear systems where the state space and output space are both smooth manifolds that admit an affine connection.
An affine connection is the minimum geometric structure that captures the geometry of the space without requiring a Riemannian metric, a class of manifolds that we will term \emph{geometric manifolds}.
Geometric manifolds admit a local geodesic map that provides \emph{normal coordinates} around every point in the manifold.
Using these intrinsic coordinates, we extend the concept of concentrated Gaussian distribution on Lie-groups \cite{wolfe2011bayesian, Bourmaud_2015,ge2024geometric} to the case of geometric manifolds to provide a stochastic model for the information state of the filter.
With this formulation we can define parameters for the mean and covariance of the filter information state in normal coordinates on the manifold without requiring the associated stochastic concepts to be well defined for the true information state on the manifold.
The main result in the paper provides the tools to change coordinates for  extended concentrated Gaussians and shows that the formulae associated with these changes are intimately linked to the geometry of the manifold through the parallel transport, torsion, and curvature operators.
Once different extended concentrated Gaussians can be written in the same normal coordinates then the classical formulae for the extended Kalman filter can be applied.
The two key modifications to the filter are in the update and reset steps.
In the filter update, the covariance of the generative measurement noise model should be expressed in common coordinates with the state estimates.
The authors believe this is a novel contribution.
In the filter reset, the a posteriori state covariance estimate needs to be expressed in new coordinates associated with the updated state estimate, a result that is present in prior literature \cite{markley2003attitude,mueller2017covariance,gill2020full}.
Implementing the geometric insight leads to modifications that refine the classical EKF formulae, even accounting for using the intrinsic normal coordinates.
We provide simulations to demonstrate the advantage of the proposed modification, and note that the primary advantage is seen in the initial transient response of the filter.
Interestingly, the results show that both geometric modifications are required to generate significant performance advantages in the filter response, perhaps explaining why the literature on reset steps has not been adopted by the mainstream filter community to date.
A major secondary contribution of the paper is the derivation of easily computable approximations of the key Jacobian operators that are required to implement the geometric modifications.

This paper includes seven sections alongside the introduction and conclusion.
Section \ref{sec:prelim} defines the notation and discusses the mathematical tools used.
In Section \ref{sec:concentratedGaussian}, we present the concept of concentrated Gaussian distributions on geometric manifolds, as well as the associated mappings.
In Section  \ref{sec:geometric_ekf} we explain the EKF methodology on a manifold and propose novel geometric modifications in the filter dynamics.
We present an example of inertial navigation problem with direct pose measurements in Section \ref{sec:simulation}.
The Appendix provides the necessary background on the Jacobi field and how to use the associated ODE to derive the algebraic approximations of the Jacobians required in the filter implementation.

\section{Preliminaries}\label{sec:prelim}

For a comprehensive introduction to smooth manifolds and their geometric structures, the authors recommend \cite{lee2012smooth} and \cite{lee2018introduction}.

Let $\calM$ be a smooth manifold with dimension $m$.
For an arbitrary $\xi\in\calM$, the tangent space at $\xi$ is denoted $\tT_\xi\calM$.
The tangent bundle is denoted $\tT\calM$.
Let $\gothX(\calM)$ denote the space of smooth vector fields over $\calM$.
Let $C^\infty(\calM)$ denote the class of smooth functions on $\calM$.
Given a differentiable function between smooth manifolds $h:\calM\rightarrow\calN$, its derivative at $\xi_\circ$ is written as
\begin{align*}
    \tD_{\xi|\xi_\circ}h(\xi): \tT_{\xi_\circ}\calM\rightarrow \tT_{h(\xi_\circ)}\calN.
\end{align*}
The notation $\tD h:\tT\calM\rightarrow \tT\calN$ denotes the differential of $h$ with an implicit base point.
For $v \in \tT_\xi \calM$ then $\tD h[v] \in \tT_{h(\xi)} \calN$ is the evaluation of $\tD h$ at the point $\xi$.

\subsection{Geometric Manifolds}
A \emph{geometric manifold} is defined as a differentiable manifold equipped with an affine connection.

\subsubsection{Affine Connection}
An affine connection \cite{lee2018introduction} is an operator $\nabla:\gothX(\calM)\times\gothX(\calM)\rightarrow\gothX(\calM)$ written $\nabla(X,Y)\mapsto\nabla_X Y$, which gives a notion of directional derivative of a vector field $Y$ in direction $X$ on the manifold.
It satisfies the product rule of differentiation
\[
\nabla_X(fY)=f\nabla_X Y+ (\tD f [X]) Y.
\]
for all $f\in C^\infty(\calM)$ and $X,Y\in\gothX(\calM)$.
It is $\calC(\calM)$-linear in $X$, meaning
\[
\nabla_{f_1 X_1+f_2 X_2} Y = f_1 \nabla_{X_1}Y+f_2 \nabla_{X_2}Y,
\]
for all $f_1,f_2 \in C^\infty(\calM)$ and all $X_1,X_2,Y \in \gothX(\calM)$.
It is $\R$-linear in $Y$, meaning
\[
\nabla_X (a_1 Y_1+a_2 Y_2) = a_1 \nabla_X Y_1+a_2 \nabla_X Y_2,
\]
for all $a_1, a_2\in\R$ and all $X,Y_1,Y_2 \in \gothX(\calM)$.

\subsubsection{Geodesic and exponential mapping}
Consider a smooth manifold $\calM$ with an affine connection.
A curve $\gamma:I_\gamma\rightarrow\calM$ on an interval $I_\gamma \subset \R$ is called geodesic \cite{lee2018introduction} if
\begin{align*}
   \nabla_{\dot{\gamma}(t)}\dot{\gamma}(t) = 0
\end{align*}
for any $t\in I_\gamma$, where $I_\gamma$ is a maximal open interval in $\R$ containing 0 for which $\gamma(t)$ is uniquely defined.
For any $\hat{\xi} \in \calM$ and $v \in \tT_{\hat{\xi}}\calM$, there exists a time $t(v,\hat{\xi}) > 0$ (possibly infinite) and a unique maximal geodesic $\gamma_v:[0,t(v,\hat{\xi})) \rightarrow \calM$ that satisfies $\gamma_v(0)=\hat{\xi}$ and $\dot{\gamma}_v(0) = v$ \cite[Collorary 4.28]{lee2018introduction}.
The exponential mapping $\Exp_{\hat{\xi}}: W_{\hat{\xi}} \subset \tT_{\hat{\xi}}\calM\rightarrow\calM$ is defined by mapping each tangent vector $v \in W_{\hat{\xi}}$ to the value of its geodesic at $t=1$, that is,
\begin{align*}
   \Exp_{\hat{\xi}}(v) = \gamma_v(1)
\end{align*}
where $W_{\hat{\xi}}$ is the largest open subset of $\tT_{\hat{\xi}} \calM$ for which $\Exp_{\hat{\xi}}$ is a diffeomorphism.
Let $U_{\hat{\xi}} = \Exp_{\hat{\xi}} (W_{\hat{\xi}}) \subset \calM$ and note that $U_{\hat{\xi}}$ is open by construction.
Let $\imath_{\hat{\xi}} : \tT_{\hat{\xi}} \calM \to \R^m$ provide a locally smoothly varying linear isomorphism between $\tT_{\hat{\xi}} \calM$ and $\R^m$ for each $\hat{\xi}$.
Then the \emph{normal coordinates} on $\calM$ are the family of charts defined by
\begin{align}
\vartheta_{\hat{\xi}}:=  \Exp_{\hat{\xi}}^{-1} \circ\; \imath_{\hat{\xi}}: U_{\hat{\xi}} \to \R^m.
\label{eq:vartheta}
\end{align}

\begin{remark}
On a matrix Lie-group equipped with a Cartan-Schouten affine connection \cite{nomizu1954invariant} then the geometric $\Exp$ mapping is the same as the matrix exponential $\exp$ associated with the one-parameter Lie-subgroups.
\end{remark}

\subsubsection{Parallel transport}
A vector field $X$ is \emph{parallel} along $\gamma$ with respect to the connection $\nabla$ if
\begin{align}
   \nabla_{\dot{\gamma}(t)} X_{\gamma(t)} = 0
\label{eq:parallel_transport}
\end{align}
for all $t \in I_\gamma$.
Given any vector $X_{\gamma(0)} \in \tT_{\gamma(0)} \calM$ and smooth curve $\gamma(t)$ then there is a unique family of vectors $X_{\gamma(t)}$ that satisfy \eqref{eq:parallel_transport}.
This correspondence induces an invertible linear  map, termed \emph{parallel transport}, $\PT^{0,t}_{\gamma} : \tT_{\gamma(0)} \calM \to \tT_{\gamma(t)} \calM$ between tangent spaces by
\[
\PT^{0,t}_{\gamma} X_{\gamma(0)} := X_{\gamma(t)}
\]
for all $X_{\gamma(0)} \in \tT_{\gamma(0)} \calM$.
The inverse of parallel transport is parallel transport along the same geodesic but in reverse direction, $(\PT^{0,t}_{\gamma})^{-1} = \PT_{\gamma}^{t,0}$.

\subsubsection{Curvature}\label{sec:curvature}

The Riemann curvature tensor $\calR: \mathfrak{X}(M) \times \mathfrak{X}(M) \times \mathfrak{X}(M) \rightarrow \mathfrak{X}(M)$ is defined as
\[
    \calR(X, Y) Z=\nabla_X \nabla_Y Z-\nabla_Y \nabla_X Z-\nabla_{[X, Y]} Z,
\]
where $[X,Y]$ is the Lie bracket of vector fields.
Note that this definition does not require a Riemannian metric despite the similarity in name.
The Riemann curvature tensor is linear in $X$, $Y$ and $Z$ due to linearity of the affine connection and is a (3,1)-tensor.

\subsubsection{Partial pushforward maps of the exponential mapping}\label{sec:differential_exp}
Let $v_\xi \in \tT \calM$ denote an element of the tangent bundle with $v_\xi \in \tT_\xi \calM$. 
We will write $(\xi, v) = (\xi, \imath_\xi (v_\xi))$ to denote a local trivialisation of tangent vectors $v_\xi$ and their $\calM \times \R^m$ coordinates. 
The full exponential map, $(\xi, v) \mapsto \vartheta^{-1}_\xi(v) = 
\Exp_\xi (v_\xi)$, is defined on a neighborhood of $(\calM,0) \subset \tT \calM$ \cite{lee2018introduction}, and depends both on the argument $v \in \R^m$ and on the base point $\xi \in \calM$ in local coordinates. 
Let $\tD_\xi\vartheta^{-1}_\xi(v):\tT_\xi\calM\cong\R^m\to\tT_{\vartheta^{-1}_\xi(v)}\calM\cong\R^m$ and $\tD_v\vartheta^{-1}_\xi(v):\tT_v\tT_\xi\calM\cong\R^m\to\tT_{\vartheta^{-1}_\xi(v)}\calM\cong\R^m$ denote the partial pushforward maps of the exponential map with respect to the positional argument $\xi$ and the tangential argument $v_\xi$ respectively.
    For $w\in\R^m$, these maps are given by
    \begin{align*}
        \tD_\xi\vartheta^{-1}_\xi(v)[w] & :=  \left.\ddt\right|_{t=0}\vartheta^{-1}_{\vartheta^{-1}_\xi(tw)}(v),\\
        \tD_v\vartheta^{-1}_\xi(v)[w] & := \left.\ddt\right|_{t=0}\vartheta^{-1}_\xi(v+tw).
    \end{align*}

In the robotics community \cite{Chirikjian_2011}, the \emph{tangential} partial pushforward map $\tD_v\vartheta^{-1}_\xi(v)$ is referred to as the \emph{Jacobian} of the exponential map \cite{Chirikjian_2011} and is often written $J(\zeta)$, for $\zeta = \vartheta^{-1}_\xi(v) \in \calM$, 
with the base point implicit. 
Since we consider Jacobians with respect to both position $\xi$ and tangential $\zeta = \vartheta^{-1}_\xi(v)$ variation we will use $\jp{}{\xi}{\zeta}$ as a shorthand for $\tD_\xi\vartheta^{-1}_\xi(v)$ and $\jt{}{\xi}{\zeta}$ as a shorthand for $\tD_v\vartheta^{-1}_\xi(v)$.
The subscripts $1$ and $2$ correspond to differentiating with respect to the positional and tangential arguments respectively.
In Appendix~\ref{sec:approximation_dexp}, we use Jacobi field theory and curvature to provide algebraic approximations of the partial pushforward maps that overcomes the challenge of computing analytic expressions for these transcendental functions.

The left-inverse Jacobian $\jt{-1}{\xi}{\zeta}:\tT_\zeta\calM\to\tT_\xi\calM$ is given by
\begin{align*}
    \jt{-1}{\xi}{\zeta}[w] & = \tD_\zeta \vartheta_\xi(\zeta)\cdot\tD_\epsilon|_0\vartheta^{-1}_\zeta(\epsilon)[w] \\  
    & = \left.\ddt\right|_{t=0}\vartheta_\xi(\vartheta^{-1}_\zeta(tw)).
\end{align*}
It is straightforward to see that $\jt{-1}{\xi}{\zeta}\cdot \jt{}{\xi}{\zeta}[w] = w$.

\begin{remark}
    Although there may exist specific instances where the equality holds, in the general case, for an arbitrary $w\in\tT_{\zeta}\calM$, it does not necessarily follow that $\jt{}{\xi}{\zeta}\cdot \jt{}{\zeta}{\xi}[w] = w$ \cite[Chapter 4, Lemma 3.5]{lang2012fundamentals} since the derivatives are evaluated at different coordinates.
\end{remark}

\begin{lemma}\label{lemma:dxlog}
    Given $\xi,\zeta\in\calM$ and $w\in\R^m$, the differential of the logarithmic map $\vartheta_\xi(\zeta)$ with respect to the positional argument $\xi$ satisfies
    \begin{align*}
        \tD_\xi \vartheta_\xi(\zeta)[w] &= -\left(\tD_v\vartheta^{-1}_\xi(v)\right)^{-1}\cdot \tD_\xi\vartheta^{-1}_\xi(v)[w] \\
                            &= -\jt{-1}{\xi}{\zeta} \cdot \jp{}{\xi}{\zeta}[w],
    \end{align*}
    with $v = \vartheta_\xi(\zeta)$.
\end{lemma}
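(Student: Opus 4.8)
The plan is to differentiate the defining identity of the chart, $\vartheta^{-1}_\xi\circ\vartheta_\xi = \id$, with respect to the \emph{base point} $\xi$ (holding the point $\zeta$ fixed) and then solve the resulting linear equation for the unknown $\tD_\xi\vartheta_\xi(\zeta)$. First I would fix $\zeta\in\calM$ and observe that, since each $\vartheta_\xi$ is a diffeomorphism of $U_\xi$ onto its image, the identity $\vartheta^{-1}_\xi\bigl(\vartheta_\xi(\zeta)\bigr)=\zeta$ holds not just at the given base point but for all base points in a neighbourhood of it. This uses the openness of $U_\xi$ together with the smooth dependence of $\Exp$ and $\imath$ on the base point, so that $\zeta$ stays inside $U_\xi$ under small perturbations of $\xi$. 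Consequently the map $\xi\mapsto\vartheta^{-1}_\xi\bigl(\vartheta_\xi(\zeta)\bigr)$ is locally constant, equal to $\zeta$, and therefore has vanishing derivative at $\xi$ in every direction $w\in\R^m$.

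Next I would realise this constant map as the composition of the jointly smooth two–argument map $(\xi,v)\mapsto\vartheta^{-1}_\xi(v)$, defined near $(\calM,0)\subset\tT\calM$, with the curve $t\mapsto(\xi(t),v(t))$ where $\xi(t):=\vartheta^{-1}_\xi(tw)$ is precisely the coordinate curve through $\xi$ used to define $\tD_\xi$, and $v(t):=\vartheta_{\xi(t)}(\zeta)$, so that $v(0)=v:=\vartheta_\xi(\zeta)$ and $\dot v(0)=\tD_\xi\vartheta_\xi(\zeta)[w]$. Applying the multivariable chain rule and matching the two partial derivatives that appear against the curve-based definitions of $\tD_\xi\vartheta^{-1}_\xi(v)[\cdot]$ and $\tD_v\vartheta^{-1}_\xi(v)[\cdot]$ given above, differentiation at $t=0$ yields
\[
0 \;=\; \tD_\xi\vartheta^{-1}_\xi(v)[w] + \tD_v\vartheta^{-1}_\xi(v)\bigl[\tD_\xi\vartheta_\xi(\zeta)[w]\bigr] \;=\; \jp{}{\xi}{\zeta}[w] + \jt{}{\xi}{\zeta}\bigl[\tD_\xi\vartheta_\xi(\zeta)[w]\bigr].
\]

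Finally I would invert. Since $\jt{}{\xi}{\zeta}=\tD_v\vartheta^{-1}_\xi(v)$ is a linear isomorphism $\R^m\to\tT_\zeta\calM$ and $\jt{-1}{\xi}{\zeta}$ is a left inverse (recall $\jt{-1}{\xi}{\zeta}\cdot\jt{}{\xi}{\zeta}[w]=w$ from its definition), it is the two-sided inverse. Applying $\jt{-1}{\xi}{\zeta}$ to the displayed identity and rearranging gives $\tD_\xi\vartheta_\xi(\zeta)[w] = -\jt{-1}{\xi}{\zeta}\cdot\jp{}{\xi}{\zeta}[w] = -\bigl(\tD_v\vartheta^{-1}_\xi(v)\bigr)^{-1}\cdot\tD_\xi\vartheta^{-1}_\xi(v)[w]$, which is exactly both claimed forms.

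The main obstacle I anticipate is the careful justification of the chain-rule step: one must confirm that the two curve-defined quantities $\tD_\xi\vartheta^{-1}_\xi(v)$ and $\tD_v\vartheta^{-1}_\xi(v)$ are genuinely the two partial derivatives of the single jointly smooth map $(\xi,v)\mapsto\vartheta^{-1}_\xi(v)$, so that the total derivative along $t\mapsto(\xi(t),v(t))$ splits as their sum, and that replacing the straight-line curve $v+tw$ by the general curve $v(t)$ does not change the second partial (it depends only on $\dot v(0)$ by linearity). A secondary point is the domain bookkeeping ensuring that $\vartheta^{-1}_\xi(\vartheta_\xi(\zeta))=\zeta$ persists along the whole curve $\xi(t)$; the remaining algebra (invertibility of $\jt{}{\xi}{\zeta}$ and rearrangement) is routine.
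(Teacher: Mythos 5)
Your proposal is correct and follows essentially the same route as the paper: both differentiate the identity $\vartheta^{-1}_{\xi(t)}\bigl(\vartheta_{\xi(t)}(\zeta)\bigr)=\zeta$ along a curve through $\xi$ with initial velocity $w$, split the total derivative into the positional and tangential partial pushforwards via the chain rule, and solve the resulting linear equation by applying $\jt{-1}{\xi}{\zeta}$. The only cosmetic difference is that you use the specific coordinate curve $\xi(t)=\vartheta^{-1}_\xi(tw)$ matching the paper's definition of $\tD_\xi$, while the paper takes a generic curve with $\dot{\xi}(0)=w$ and phrases the derivative of $\lambda(t)=\vartheta_{\xi(t)}(\zeta)$ covariantly.
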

\begin{proof}
    Let $\xi(t):[0,1]\to\calM$ be a smooth curve on $\calM$ with $\xi(0)=\xi$ and $\dot{\xi}(0)=w$, such that the whole curve lies in the normal neighborhood of $\zeta$, i.e., $\lambda(t):=\vartheta_{\xi(t)}(\zeta)$ is a smooth vector field along the curve that is always well-defined.

    Notice that one has $\zeta = \vartheta^{-1}_{\xi(t)}(\vartheta_{\xi(t)}(\zeta))$.
    Since $\zeta$ is fixed, differentiating both sides with respect to $t$ gives
    \begin{align*}
        0 = \left.\tD_v\right|_{\lambda(t)}\vartheta^{-1}_{\xi(t)}(v)[\nabla_{\dot{\xi}(t)}\lambda(t)] + \left.\tD_z\right|_{\xi(t)}\vartheta^{-1}_z(\lambda(t))[\dot{\xi}(t)].
    \end{align*}
    Rearranging the terms yields
    \begin{align}\label{eq:nabla_log}
        \nabla_{\dot{\xi}(t)}\lambda(t) = -\left(\left.\tD_v\right|_{\lambda(t)}\vartheta^{-1}_{\xi(t)}(v)\right)^{-1}\cdot\left.\tD_z\right|_{\xi(t)}\vartheta^{-1}_z(\lambda(t))[\dot{\xi}(t)].
    \end{align}
    Setting $t = 0$ one has $\tD_\xi \vartheta_\xi(\zeta)[w] = \nabla_{\dot{\xi}(0)}\lambda(0)$ and the result follows.
\end{proof}

\subsection{The $\boxplus/\boxminus$ Operators}

Building on established convention we will use the $\boxplus$ and $\boxminus$ operator notation introduced in \cite{hertzberg2013integrating} to model small $\R^m$ `perturbations' acting on $\calM$.
We will use the normal coordinates \eqref{eq:vartheta} to define \emph{normal box plus} $\boxplus: \calM \times \R^m \rightarrow\calM$  and \emph{normal box minus} $\boxminus:\calM\times\calM\rightarrow\R^m$ operators
by
\begin{align}
   \xi \boxplus u & = \vartheta_\xi^{-1} (u), \label{eq:boxplus} \\
   \zeta \boxminus \xi & = \vartheta_{\xi}(\zeta), \label{eq:boxminus}
\end{align}
for all $\xi \in\calM$, $\zeta \in U_\xi$ and $u \in \imath_\xi(W_\xi)$.
Both the $\boxplus$ and $\boxminus$ operators are associated with geodesic curves on the manifold and in this sense are the natural generalisation of straight line interpolation on Euclidean space.

\begin{remark}
The box-plus and box-minus notation is fundamentally a choice of local coordinates; \eqref{eq:boxminus} is a chart while \eqref{eq:boxplus} is the associated parametrization.
These operators are used in prior work \cite{aufframework,hertzberg2013integrating} to define distances between points, essentially using the Euclidean distance in local coordinates as a distance measure on the manifold.
On a Riemannian manifold the geodesic is the natural generalisation of a distance minimizing curve and hence on such manifolds the \emph{normal box minus} is the natural choice to capture this geometric concept.
In the present work, we do not restrict to only Riemannian geometry and cannot use the concept of distance although the construction still provides an intrinsic coordinate structure.  \hfill $\Box$
\end{remark}

It is straightforward to verify that the proposed operators satisfy
\begin{subequations} \label{eq:boxplus_axioms}
   \begin{align}
      \xi \boxplus 0 &= \xi, \\
      \xi \boxplus (\zeta \boxminus \xi) &= \zeta, \\
      (\xi \boxplus u) \boxminus \xi &= u,
   \end{align}
\end{subequations}
the first three of the four requirements of the original definition proposed in \cite[Def.~1]{hertzberg2013integrating}.
The fourth axiom in \cite[Def.~1]{hertzberg2013integrating} requires that
\begin{gather}
|(\hat{\xi} \boxplus \delta_1) \boxminus (\hat{\xi} \boxplus \delta_2)|^2 \leq |\delta_1 - \delta_2|^2.
\label{eq:axiom4}
\end{gather}
This will not hold on a general manifold.
To see this consider the particular case where the affine connection $\nabla$ is the Levi-Civita connection associated with a Riemannian metric, a subset of the set of geometric manifolds.
In this case, the Ricci curvature $\text{Ric} : \tT_\xi \calM \times \tT_\xi \calM \to \R$ is defined and one has \cite{muller1997closed}
\[
|(\hat{\xi} \boxplus \delta_1) \boxminus (\hat{\xi} \boxplus \delta_2)|^2
=
|\delta_1 - \delta_2|^2
-\frac{1}{3} \text{Ric}(\delta_1, \delta_2) + \order(|\delta|^2).
\]
For manifolds with non-negative curvature; that is, $\text{Ric} \geq 0$ is positive semi-definite, this is always true.
However, for any manifold with negative curvature, \eqref{eq:axiom4} fails even locally.

\begin{remark}
The fourth axiom in \cite[Def.~1]{hertzberg2013integrating} was used to prove properties of the mean of the true information state on the manifold $\calM$ associated with properties of the mean of distributions defined in the $\R^m$ chart.
In general manifolds the concept of mean and covariance on the manifold are unclear and such a correspondence will not be possible.
\end{remark}

\section{Concentrated Gaussian Distributions} \label{sec:concentratedGaussian}
In this section, we revise the concept of a concentrated Gaussian distribution \cite{wang2006error} to geometric manifolds, that is, smooth manifolds that admit affine connections.
The main results in the section provide formula for coordinate transformations of concentrated Gaussian distributions.

A global volume measure can be defined on any smooth manifold with an affine connection using the local Borel measures induced by normal coordinates and a partition of unity construction.
The information state of the system lies in the class of all probability distributions on the manifold that are integrable with respect to this measure.
Such distributions are extremely general and, on arbitrary geometric manifold, concepts such as the mean and covariance are not well-defined.
An extended Kalman filter algorithm does not directly approximate the true system state probability distribution.
Rather, an EKF algorithm searches within a set of \emph{Gaussian-like} probability distributions, parameterised by mean and covariance parameters, for the distribution that is ``closest'' to the true system information state in an informal information theoretic sense.
A crucial insight is that the mean and covariance distribution parameters used in the filter formulation only have meaning in the sense of providing a finite dimensional parametrisation for the filter algorithm, \emph{they do not need to correspond to a stochastic concept of mean or covariance of the system information state distribution}.
In other words, it is not necessary that the mean and covariance parameters used as state in an EKF correspond to statistics of the filter distribution or converge to statistics of the true distribution as long as the distribution generated by the filter parametrisation is close stochastically to the true distribution.


\subsection{Concentrated Gaussian Definition}\label{sub:stochastic_model}

In the remainder of the paper we assume that both the system state space $\calM$ and output space $\calN$ admit affine connections and that we work with normal coordinates \eqref{eq:vartheta}
\begin{align}
\vartheta_{\hat{\xi}} &: \calM \rightarrow \R^m \\
\varphi_{\hat{y}}  & : \calN \rightarrow \R^n
\end{align}
We approximate a general distribution $p : \calM \to \R_+$ around $\hat{\xi} \in \calM$ by a \emph{concentrated Gaussian distribution} (function of $\xi \in \calM$) \cite{wang2006error}
\begin{align}
\GP_{\hat{\xi}}(\mu,\Sigma) := \lambda^{-1}
\exp(-\frac{1}{2}(\vartheta_{\hat{\xi}}(\xi)-\mu)^\top\Sigma^{-1}(\vartheta_{\hat{\xi}}(\xi)-\mu)),
\label{eq:ConcentratedGaussian}
\end{align}
where
\[
\lambda := \left| \int_{U_{\hat{\xi}}}
\exp(-\frac{1}{2}(\vartheta_{\hat{\xi}}(\xi)-\mu)^\top\Sigma^{-1}(\vartheta_{\hat{\xi}}(\xi)-\mu))
 \td \xi \right|
\]
is a normalizing factor, $\mu\in \R^m$ is a mean vector parameter and $\Sigma\in\mathbb{S}_+(m)$ is a positive-definite symmetric $m\times m$ covariance matrix parameter.
Note that the support for the distribution $\GP_{\hat{\xi}}(\mu,\Sigma)$ is contained in the open set $U_{\hat{\xi}} \subset \calM$.
Within this set, the distribution in normal coordinates $x = \vartheta_{\hat{\xi}}(\xi)$ looks like a trimmed Gaussian, and the first and second order statistics $\mu$ and $\Sigma$ have stochastic interpretations.



\subsection{Coordinate Changes}
\label{sec:coordinate_transform}
\begin{figure}[!htb]
    \centering
    \includegraphics[width=0.5\linewidth]{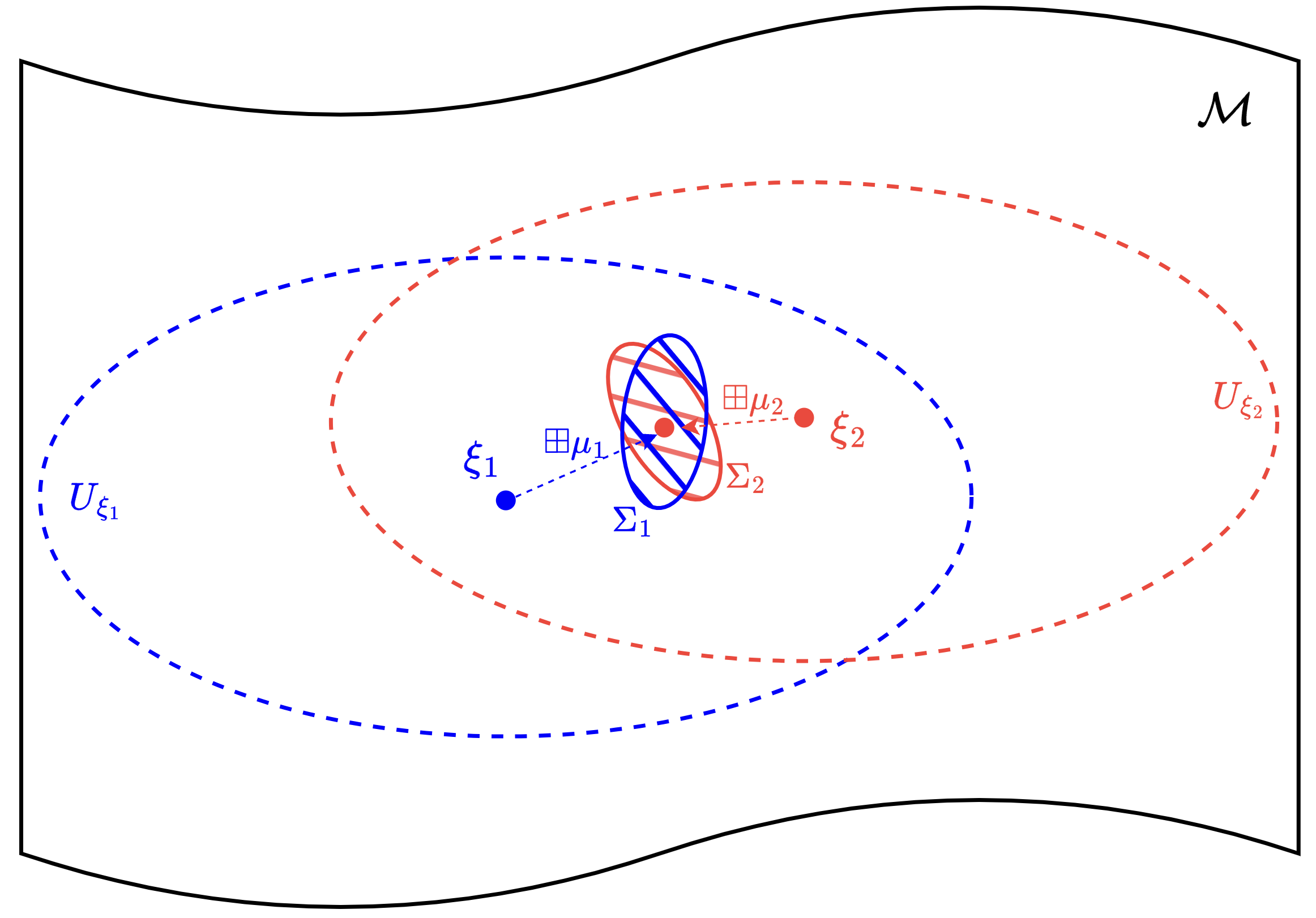}
    \caption{
The relationship between two concentrated Gaussian distributions expressed in different normal coordinates on $\calM$.
The hatched ellipse in blue represents the concentrated Gaussian $\xi\sim\GP_{\xi_1}(\mu_1,\Sigma_1)$, defined in $U_{\xi_1}$. The hatched ellipse in red represents the concentrated Gaussian $\xi\sim\GP_{\xi_2}(\mu_2,\Sigma_2)$, defined in $U_{\xi_2}$, where $\zeta = \xi_1\boxplus\mu_1 = \xi_2\boxplus\mu_2$.}
\label{fig:coordinate_transform}
\end{figure}

An important property of Definition \eqref{eq:ConcentratedGaussian} is that it formally allows for distributions in local coordinates with non-zero means.
One of the key technical result in this paper is to provide an expression for the
``closest'' concentrated Gaussian expressed in a desired set of normal coordinates to a concentrated Gaussian defined in another different set of normal coordinates.
This result is key requirement in order to apply the standard update and reset equations in Kalman filtering as well as having insight for the propagation step.

To make the problem clear, consider a known distribution in local coordinates centered at $\xi_1$ with mean $\mu_1$ and covariance $\Sigma_1$ as shown in blue in Figure~\ref{fig:coordinate_transform}.
The goal is to find the concentrated Gaussian (shown in red in Figure~\ref{fig:coordinate_transform}) defined in coordinates centered at $\xi_2$
that most closely approximates the original distribution in a stochastic sense.
That is, we are looking for the concentrated Gaussian defined in coordinates centered at $\xi_2$
\begin{align}\label{eq:pdf_1}
    \GP_{\xi_2}&(\mu_2,\Sigma_2):=\nonumber \\
    &\lambda_2^{-1} \exp((-\frac{1}{2}(\vartheta_{\xi_2}(\xi)-\mu_2)^\top\Sigma_2^{-1}(\vartheta_{\xi_2}(\xi)-\mu_2)),
\end{align}
(where $\lambda_2$ is a normalizing factor) that is closest to $\GP_{\xi_1}(\mu_1,\Sigma_1)$.
Note that in the case when $\mu_1=0$ this corresponds to the update modification in an EKF algorithm as proposed in Section~\ref{sec:update}.
When $\mu_2=0$ this corresponds to the reset step.
One may imagine $\Sigma_1 = \Sigma_2$ would be the correct choice, however, that will only be the case if the space $\calM$ is flat (with zero curvature).

\begin{theorem}\label{theorem:coordinate_transform}
    Given a random variable $x\in\R^m$ represented by a Gaussian distribution $x\sim\GP(0,\Sigma)$, let $\theta:\R^m\to\R^m$ be an analytic nonlinear function that satisfies $\theta(0)=0$.
    Assume the covariance $\Sigma_{\theta} = \E[\theta(x)\theta(x)^\top]$ of $\theta(x)$ is well-defined.
    Then $\Sigma_{\theta}$ can be computed by
    \begin{align}
    \Sigma_{\theta} = (\tD_z|_0\theta(z))\;\Sigma\;(\tD_z|_0\theta(z))^\top + \Order(\tr(\Sigma^2)).
    \end{align}
\end{theorem}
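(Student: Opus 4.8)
The plan is to prove this by a direct Taylor expansion of $\theta$ around $0$, combined with the assumption $\theta(0)=0$, and then to control the residual using standard Gaussian moment bounds.

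First I would write the second-order Taylor expansion of the analytic function $\theta$ about the origin. Since $\theta(0) = 0$, we have $\theta(x) = A x + r(x)$, where $A = \tD_z|_0 \theta(z)$ is the Jacobian at $0$ and $r(x) = \Order(|x|^2)$ collects the quadratic and higher-order terms. Substituting this into the definition of $\Sigma_\theta$ gives
\begin{align*}
\Sigma_\theta = \E[\theta(x)\theta(x)^\top] = A\,\E[x x^\top]\,A^\top + A\,\E[x\, r(x)^\top] + \E[r(x)\, x^\top]\,A^\top + \E[r(x)\, r(x)^\top].
\end{align*}
The leading term is $A \Sigma A^\top = (\tD_z|_0\theta(z))\,\Sigma\,(\tD_z|_0\theta(z))^\top$, which is exactly the claimed principal term, so the remaining work is to show each of the other three terms is $\Order(\tr(\Sigma^2))$.

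Next I would estimate the cross terms and the quadratic-residual term. The residual $r(x)$ is, to leading order, a quadratic form: $r_k(x) = \tfrac12 x^\top H_k x + \Order(|x|^3)$ where $H_k$ is the Hessian of the $k$-th component of $\theta$ at $0$. For a centered Gaussian $x \sim \GP(0,\Sigma)$, the mixed moment $\E[x\, r(x)^\top]$ has entries that are (to leading order) third moments of a Gaussian, which vanish by symmetry; the next nonvanishing contribution comes from pairing $x$ with the cubic part of $r$, giving a fourth moment, hence of order $\Order(\tr(\Sigma^2))$ (more precisely bounded by a constant times the operator norm of $\Sigma$ times $\tr(\Sigma)$, which is $\Order(\tr(\Sigma^2))$ up to the usual equivalence of such quantities near the concentrated regime). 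The term $\E[r(x) r(x)^\top]$ is, to leading order, a product of two quadratic forms in $x$, whose expectation is a fourth moment of the Gaussian and so is explicitly expressible via Isserlis/Wick as sums of products $\Sigma_{ij}\Sigma_{kl}$, each bounded by $\Order(\tr(\Sigma^2))$. Convergence of all these expectations (so that the manipulations are legitimate) is guaranteed by the hypothesis that $\Sigma_\theta$ is well-defined together with analyticity of $\theta$; I would also note that on the support-trimmed domain relevant to concentrated Gaussians these are genuine integrals over a bounded set, so integrability is automatic and the tail contributions are exponentially small in the concentration.

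The main obstacle I anticipate is making the error bookkeeping rigorous rather than merely formal: the expansion of $\theta$ has infinitely many terms, and one must justify interchanging the infinite sum with the expectation and show that the \emph{entire} tail — not just the first few explicit terms — is $\Order(\tr(\Sigma^2))$ uniformly. The clean way to handle this is to exploit that all the Gaussian moments $\E[|x|^{2p}]$ scale like $(\tr\Sigma)^p$ up to combinatorial constants, so the series $\sum_p c_p (\tr\Sigma)^p$ governing the remainder is dominated, for small enough $\Sigma$, by its first omitted term $\Order((\tr\Sigma)^2) = \Order(\tr(\Sigma^2))$ in the concentrated regime where these quantities are comparable; analyticity of $\theta$ gives the needed geometric decay of the $c_p$ so the series converges and the estimate is uniform. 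Everything else is routine Gaussian moment calculation.
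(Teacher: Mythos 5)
Your proposal is correct and follows essentially the same route as the paper's proof: Taylor-expand $\theta$ about $0$, identify the leading term $(\tD_z|_0\theta(z))\,\Sigma\,(\tD_z|_0\theta(z))^\top$, kill the odd Gaussian moments by symmetry, and bound the fourth and higher even moments via the Isserlis/Wick formula as $\Order(\tr(\Sigma^2))$. Your additional care about interchanging the infinite Taylor series with the expectation and uniformly controlling the tail is a point the paper's proof glosses over, but it does not change the argument's structure.
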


\begin{proof}
    Write the Taylor expansion of $\theta(x)$ around $x=0$.
    Since $\theta(0)=0$, one has
    \[
    \theta(x) = \tD_z|_0\theta(z)\,[x] + \frac{1}{2}\tD^2_z|_0\theta(z)[x,x] + \dots
    \]
    Given the definition of the covariance, one has $\Sigma = \E[xx^\top]$ and $\Sigma_{\theta} = \E[\theta(x)\theta(x)^\top]$.
    Substituting the expansion into the definition of the covariance $\Sigma_{\theta}$, one has
    \begin{align*}
    \Sigma_{\theta} &= \E[\theta(x)\theta(x)^\top]\\
    & = \E\big[(\tD_z|_0\theta(z)[x] + \frac{1}{2}\tD^2_z|_0\theta(z)[x,x] + \dots)\\
    &\hspace{2cm}(\tD_z|_0\theta(z)[x] + \frac{1}{2}\tD^2_z|_0\theta(z)[x,x] + \dots)^\top\big].
    \end{align*}
    The second moment can be derived:
    \begin{align*}
        &\quad\E\big[(\tD_z|_0\theta(z)[x])(\tD_z|_0\theta(z)[x] )^\top\big]\\
        &= \E\big[\tD_z|_0\theta(z)[x][x^\top]\tD_z|_0\theta(z)^\top\big]\\
        &= (\tD_z|_0\theta(z))\E[xx^\top](\tD_z|_0\theta(z))^\top\\
        &= (\tD_z|_0\theta(z))\;\Sigma\;(\tD_z|_0\theta(z))^\top.
    \end{align*}
    Note that when taking the expectation under $x \sim \mathcal{N}(0,\Sigma)$, any monomial of x of odd total degree has zero expectation, hence the third moment and all the odd moments of $x$ vanish.
    The fourth moment  $\E[x_i\,x_j\,x_k\,x_\ell]$ is given by
    \[
    \E[\,x_i\,x_j\,x_k\,x_\ell\,]\;=\;\Sigma_{i j}\;\Sigma_{k \ell}\;+\;\Sigma_{i k}\;\Sigma_{j \ell}\;+\;\Sigma_{i \ell}\;\Sigma_{j k}.
    \]
    It is of order $\tr(\Sigma^2)$ as each entry is a sum of products of $\Sigma$'s two entries, and all the even higher moments are at least of order $\tr(\Sigma^2)$.
    This completes the proof.
\end{proof}

\begin{lemma}\label{lemma:standard_linearisation}
    Given an arbitrary concentrated Gaussian distribution $p(\xi) = \GP_{\xi_1}(\mu_1,\Sigma_1)$ on $\calM$, define $\xi^\diamond = \vartheta^{-1}_{\xi_1}(\mu_1)$.
    For any $\xi_2\in\calM$ in the neighborhood $U_{\xi_1}$ around $\xi_1$, then the concentrated Gaussian distribution $\GP_{\xi_2}(\mu_2,\Sigma_2)$ for parameters
    \begin{align}
    \mu_2 & = \vartheta_{\xi_2}\vartheta^{-1}_{\xi_1}(\mu_1) \label{eq:mu_2}\\
    \Sigma_2 & = \jt{-1}{\xi_2}{\xi^\diamond}\;\jt{}{\xi_1}{\xi^\diamond} \;\Sigma_1 \;\jt{\top}{\xi_1}{\xi^\diamond}\;\jt{-\top}{\xi_2}{\xi^\diamond},\label{eq:Sigma_2}
    \end{align}
    achieves a minimal Kullback-Leibler divergence up to second-order error $\Order(\tr({\Sigma_1}^2))$.
\end{lemma}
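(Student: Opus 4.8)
The plan is to reduce the statement to a computation about the coordinate transition map $\psi := \vartheta_{\xi_2} \circ \vartheta_{\xi_1}^{-1}$ on $\R^m$ and then invoke Theorem~\ref{theorem:coordinate_transform}. First I would observe that in the chart $\vartheta_{\xi_1}$ the random state $x_1 = \vartheta_{\xi_1}(\xi)$ has the (trimmed) law $\mathcal{N}(\mu_1,\Sigma_1)$, so in the chart $\vartheta_{\xi_2}$ the same random state reads $x_2 = \vartheta_{\xi_2}(\xi) = \psi(x_1)$; that is, the target distribution $p$ viewed in $\xi_2$-coordinates is the pushforward $\psi_\ast \mathcal{N}(\mu_1,\Sigma_1)$. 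The classical fact that the Kullback--Leibler projection of a distribution onto the Gaussian family is realised by matching the first two moments then reduces the problem to computing $\E[x_2]$ and $\E[(x_2-\mu_2)(x_2-\mu_2)^\top]$ to the claimed order; the trimming of the concentrated Gaussian contributes only exponentially small corrections under the concentration assumption and can be absorbed into the error term.

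Next I would set $e := x_1 - \mu_1 \sim \mathcal{N}(0,\Sigma_1)$ and define $\theta(e) := \psi(\mu_1 + e) - \psi(\mu_1)$, which satisfies $\theta(0) = 0$ and is analytic on a neighbourhood of $0$. Since $\psi(\mu_1) = \vartheta_{\xi_2}\vartheta^{-1}_{\xi_1}(\mu_1)$, this zeroth-order value is precisely the proposed mean $\mu_2$ of \eqref{eq:mu_2}, and $x_2 - \mu_2 = \theta(e)$. Applying Theorem~\ref{theorem:coordinate_transform} to $\theta$ with $x = e$ gives
\begin{align*}
\Sigma_2 = \E[\theta(e)\theta(e)^\top] = \bigl(\tD_z|_0\theta(z)\bigr)\,\Sigma_1\,\bigl(\tD_z|_0\theta(z)\bigr)^\top + \Order(\tr(\Sigma_1^2)),
\end{align*}
so it only remains to identify $\tD_z|_0\theta(z) = \tD\psi(\mu_1)$ with the composite Jacobian in \eqref{eq:Sigma_2}.

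For that identification I would apply the chain rule to $\psi = \vartheta_{\xi_2}\circ\vartheta^{-1}_{\xi_1}$ at $\mu_1$. Writing $\xi^\diamond = \vartheta^{-1}_{\xi_1}(\mu_1)$, the tangential pushforward of $\vartheta^{-1}_{\xi_1}$ at $\mu_1$ is $\jt{}{\xi_1}{\xi^\diamond}$ by the definition of $\jt{}{\cdot}{\cdot}$ in Section~\ref{sec:differential_exp}, while $\tD\vartheta_{\xi_2}$ evaluated at $\xi^\diamond$ is the inverse of the tangential pushforward of $\vartheta^{-1}_{\xi_2}$ at $\mu_2 = \vartheta_{\xi_2}(\xi^\diamond)$, namely $\jt{-1}{\xi_2}{\xi^\diamond}$. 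Composing gives $\tD\psi(\mu_1) = \jt{-1}{\xi_2}{\xi^\diamond}\,\jt{}{\xi_1}{\xi^\diamond}$, and substituting into the displayed covariance expression yields \eqref{eq:Sigma_2}. Combining this with the moment-matching characterisation of the KL minimiser gives that $(\mu_2,\Sigma_2)$ attains the minimal divergence up to the stated $\Order(\tr(\Sigma_1^2))$ error.

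The part I expect to need the most care is the error bookkeeping rather than the algebra: one has to argue precisely that (i) the KL projection onto the Gaussian family coincides with moment matching (via the stationarity conditions of $D(p\,\Vert\,\mathcal{N}(m,S))$ in $m$ and $S$), (ii) replacing the exact first two moments of $\psi_\ast\mathcal{N}(\mu_1,\Sigma_1)$ by the zeroth-order mean $\psi(\mu_1)$ and the second moment $\E[\theta(e)\theta(e)^\top]$ perturbs the objective only within the claimed order, with the $\Order(\tr(\Sigma_1^2))$ bound inherited directly from Theorem~\ref{theorem:coordinate_transform}, and (iii) the domain and regularity hypotheses ($\xi_2\in U_{\xi_1}$, the support of $p$ contained in $U_{\xi_2}$, analyticity of the normal-coordinate transition map) hold so that all the maps and Taylor expansions above are legitimate. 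Everything geometric — the emergence of $\jt{}{\xi_1}{\xi^\diamond}$ and $\jt{-1}{\xi_2}{\xi^\diamond}$ — then follows purely from the chain rule and the identifications of Section~\ref{sec:differential_exp}.
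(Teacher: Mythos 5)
Your proposal is correct and follows essentially the same route as the paper: both characterise the KL minimiser by moment matching (the paper derives this by setting $\tD_{\mu_2}\mathrm{KL}$ and $\tD_{\Sigma_2}\mathrm{KL}$ to zero, which is exactly the stationarity argument you defer to), and both then apply Theorem~\ref{theorem:coordinate_transform} to the transition map $\vartheta_{\xi_2}\circ\vartheta_{\xi_1}^{-1}$ expanded about $\mu_1$, identifying its differential as $\jt{-1}{\xi_2}{\xi^\diamond}\,\jt{}{\xi_1}{\xi^\diamond}$ by the chain rule. Your explicit flagging of the error bookkeeping (replacing the exact pushforward mean by $\psi(\mu_1)$ and absorbing the trimming) is if anything slightly more careful than the paper's own treatment.
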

\begin{proof}
    Define $q(\xi) = \GP_{\xi_2}(\mu_2,\Sigma_2)$ for general parameters $\mu_2\in\R^m$ and $\Sigma_2 \in S_+(m)$.
    The Kullback-Leibler divergence between $p(\xi)$ and $q(\xi)$ is given by
    \begin{align*}
    {\text{KL}}(p||q)&= \E_p[\log(p)-\log(q)]\\
    & = C_p - \E_p[\log(q)]\\
    & = C_p +\frac{m}{2}\log(2\pi)+ \frac{1}{2}\log\det(\Sigma_2)\\
    &\qquad+\frac{1}{2}\E_p[(\vartheta_{\xi_2}(\xi)-\mu_2)^\top \Sigma_2^{-1}(\vartheta_{\xi_2}(\xi)-\mu_2)],
    \end{align*}
    where $C_p$ is the negative entropy of $p(\xi)$ and $m$ is the dimension of $\Sigma$.
    We show that the differential of ${\text{KL}}(p||q)$ at \eqref{eq:mu_2} and \eqref{eq:Sigma_2} is zero.

    The derivative of $ {\text{KL}}(p||q)$ with respect to $\mu_2$ in an arbitrary direction $w \in \R^m$ is given by
    \begin{align*}
    \tD_{\mu_2} &\text{KL}(p||q)[w]\\
    & = \frac{1}{2}\tD_{\mu_2}\E_p[(\vartheta_{\xi_2}(\xi)-\mu_2)^\top \Sigma_2^{-1}(\vartheta_{\xi_2}(\xi)-\mu_2)][w]. \\
    & = - \frac{1}{2}\E_p[w^\top \Sigma_2^{-1}(\vartheta_{\xi_2}(\xi)-\mu_2)].
    \end{align*}
    Substituting, for \eqref{eq:mu_2} one finds that
    $\tD_{\mu_2}\text{KL}(p||q)[w] \equiv 0$ for all $w \in \R^m$.

    The derivative of $ {\text{KL}}(p||q)$ with respect to $\Sigma_2$ in an arbitrary direction $W \in S(m) \equiv \tT_\Sigma S_+(m)$ is given by
    \begin{align*}
    \tD_{\Sigma_2}&\text{KL}(p||q)[W] = \frac{1}{2}\tD_{\Sigma_2}\log\det({\Sigma_2})[W]\\
    &\qquad+\frac{1}{2}\tD_{\Sigma_2}\E_p[(\vartheta_{\xi_2}(\xi)-\mu_2)^\top \Sigma_2^{-1}(\vartheta_{\xi_2}(\xi)-\mu_2)][W]
    \end{align*}
    where we use the Frobenius inner product on matrix space to compute the direction derivative in the direction $W \in S(m)$.
    Solving the first term yields
    \begin{align*}
    \tD_{\Sigma_2}\log\det({\Sigma_2})[W]&=\frac{1}{\det(\Sigma_2)}\det(\Sigma_2)\tr(\Sigma_2^{-1}u)\\
    &=\tr(\Sigma_2^{-1}W).
    \end{align*}
    Solving the second term yields
    \begin{align*}
    &\tD_{\Sigma_2}\E_p[(\vartheta_{\xi_2}(\xi)-\mu_2)^\top \Sigma_2^{-1}(\vartheta_{\xi_2}(\xi)-\mu_2)][W]\\
    &=\E_p[(\vartheta_{\xi_2}(\xi)-\mu_2)^\top \Sigma_2^{-1}W \Sigma_2^{-1}(\vartheta_{\xi_2}(\xi)-\mu_2)]\\
    &=\E_p[\tr((\vartheta_{\xi_2}(\xi)-\mu_2)^\top \Sigma_2^{-1}W \Sigma_2^{-1}(\vartheta_{\xi_2}(\xi)-\mu_2))]\\
    &=\tr( \Sigma_2^{-1} \E_p[(\vartheta_{\xi_2}(\xi)-\mu_2)(\vartheta_{\xi_2}(\xi)-\mu_2)^\top] \Sigma_2^{-1}W).
    \end{align*}
    Combining terms one gets
    \begin{align}
    &\tD_{\Sigma_2}\text{KL}(p||q)[W] = \notag \\
    &\vspace{1cm}\frac{1}{2}\tr((\Sigma_2^{-1}+
   \Sigma_2^{-1} \E_p[(\vartheta_{\xi_2}(\xi)-\mu_2)(\vartheta_{\xi_2}(\xi)-\mu_2)^\top] \Sigma_2^{-1})W).\label{eq:DSigmaKL}
    \end{align}
    Setting \eqref{eq:DSigmaKL} to zero yields
    \begin{align}
        \Sigma_2^* = \E_p[(\vartheta_{\xi_2}(\xi)-\mu_2)(\vartheta_{\xi_2}(\xi)-\mu_2)^\top], \label{eq:SigmaOpt}
    \end{align}
    where $\Sigma_2^*$ is the optimal value of $\Sigma_2$ that minimises $ {\text{KL}}(p||q)$.

Recall that by definition, $\Sigma_1 = \E_p[(\vartheta_{\xi_1}(\xi)-\mu_1)(\vartheta_{\xi_1}(\xi)-\mu_1)^\top]$.
Define
\[
x_p = \vartheta_{\xi_1}(\xi)-\mu_1 \quad \text{and}\quad x_q = \vartheta_{\xi_2}(\xi)-\mu_2,
\]
for some arbitrary $\xi\in U_{\xi_1}$ sampled from $p(\xi)$.
One has
\begin{align}\label{eq:p2q}
x_q = \vartheta_{\xi_2}\vartheta^{-1}_{\xi_1}(x_p+\mu_1)-\mu_2.
\end{align}
Taking the Taylor series of \eqref{eq:p2q} evaluated at $x_p=0$ yields:
\begin{align*}
x_q &= 0 + \tD \vartheta_{\xi_2}(\xi^\diamond)\tD\vartheta^{-1}_{\xi_1}(\mu_1) x_p + \Order(\lvert x_p \rvert^2) \\
&= \jt{-1}{\xi_2}{\xi^\diamond}\;\jt{}{\xi_1}{\xi^\diamond} x_p+ \Order(\lvert x_p \rvert^2).
\end{align*}
Using Theorem~\ref{theorem:coordinate_transform} one can compute
\begin{align*}
    \Sigma_2^* & = \E_p[x_q x_q^\top]\\
    & = \jt{-1}{\xi_2}{\xi^\diamond}\;\jt{}{\xi_1}{\xi^\diamond} \;\Sigma_1 \;\jt{\top}{\xi_1}{\xi^\diamond}\;\jt{-\top}{\xi_2}{\xi^\diamond} \\
    & \qquad\qquad\qquad + \Order(\tr(\Sigma_1^2)).
\end{align*}
Substituting into \eqref{eq:DSigmaKL} it follows that
$\tD_{\Sigma_2}\text{KL}(p||q)[u] \equiv 0$ for all $W \in S(m)$.

\end{proof}


\section{Geometric EKF on smooth manifolds}\label{sec:geometric_ekf}
In this section, we extend the conventional EKF design to systems on smooth manifolds and integrate the geometric structure into the filter construction.
We derive the filter in the error-state formulation \cite{lefferts1982kalman,1999_Roumeliotis,sola2017quaternion}, that considers the propagation of the information state of an error $\varepsilon$ between the true state $\xi$ and the nominal state $\hat{\xi}$.

\subsection{System Model}
\label{sec:problem_description}
We consider a nonlinear discrete-time system on a smooth manifold $\calM$ with system function
\begin{align}\label{eq:system_twonoise}
    \xi_{k+1} = F&(\xi_k,u_{k+1} + \kappa^I_{k+1}) \boxplus \kappa^P_{k+1}, \notag\\
    &\kappa^I_{k+1}\sim\GP(0,Q^I_{k+1}),\notag\\
    &\kappa^P_{k+1}\sim\GP_{F(\xi_k,u_{k+1} + \kappa^I_{k+1})}(0,Q^P_{k+1}),
\end{align}
where $u_{k+1}\in\vecL$ is the vector input.
There are two types of process noise present in typical systems; the input noise $\kappa^I_{k+1}$, and the state noise $\kappa^P_{k+1}$.
The input noise is modelled as a Gaussian distribution on the linear input space $\vecL$, while the state noise is a concentrated Gaussian distribution on $\calM$.

We consider a simplified model with linearised input noise,
\begin{align}
\xi_{k+1} = F&(\xi_k,u_{k+1}) \boxplus (\kappa^P_{k+1} + B_{k+1}\kappa^I_{k+1}),
\end{align}
where $B_{k+1} = \frac{\partial F(\xi_k, u_{k+1})}{\partial u_{k+1}}$.
Combining the noise terms into a single process noise term one has
\begin{align}\label{eq:system}
\xi_{k+1} = F&(\xi_k,u_{k+1}) \boxplus \kappa_{k+1},\\
        &\kappa_{k+1}\sim\GP_{ F(\xi_k,u_{k+1}) }(0,Q_{k+1}),\notag
\end{align}
with $Q_{k+1} = Q^P_{k+1} + B_{k+1}Q^I_{k+1}B_{k+1}^\top$.

To understand the geometry of the noise term, define a virtual state $\xi^\diamond_{k+1} = F(\xi_k,u_{k+1})$ that represents the propagated state without noise.
Note that the object $\xi_{k+1}^\diamond$ is virtual and not measureable in practice.
The noise process $\kappa_{k+1}$ in \eqref{eq:system} is formally defined in the normal coordinates around $\xi^\diamond_{k+1}$ and the true state $\xi_{k+1}$ is obtained by $\boxplus$ adding a realisation of the noise process \eqref{eq:system} to $\xi^\diamond_{k+1}$.

The configuration output
\begin{align}\label{eq:configuration}
    y_{k+1} = h(\xi_{k+1})\boxplus \nu_{k+1},\quad \nu_{k+1}\sim\GP(0,R_{k+1}),
\end{align}
is given by a function $h: \calM\rightarrow\calN$, where $\calN$ is a smooth manifold termed the \emph{output space}.
The disturbance $\nu_{k+1}$ is modelled as a concentrated Gaussian in normal coordinates around $h(\xi_{k+1})$ on $\calN$.

For both processing and measurement noise we assume that $R_k$ and $Q_k$ are known covariance matrices.

\subsection{Error State}\label{sec:error_state}
We first construct a \emph{local} error in normal coordinates $\varepsilon_k\in\tT_{\hat{\xi}_k}\calM$, given by
\begin{align}
    \varepsilon_k = \xi_k \boxminus \hat{\xi}_{k|k},
\end{align}
where $\xi_k$ and $\hat{\xi}_{k|k}$ are the true and estimated states.
Here the notation $\hat{\xi}_{k|k}$ indicates the estimated state has fused all data up to index $k$, including inputs $u_k$ and measurements $y_k$.
For any geometric manifold with affine connection, this construction is always locally possible.
If the state space admits symmetry, the error state can be defined globally using the group structure \cite{Mahony_2022}.
In such case, the local error is the local linearisation of this construction \cite{Barrau_2017}\cite{van2022equivariant}.

The information state of the error is
\[
\epsilon_k \sim \GP(0,\Sigma_{k|k})
\]
where the state estimate $\hat{\xi}_{k|k}$ is implicit in the definition of the error term.
The corresponding information state of the filter, that approximates the true distribution of the system state on $\calM$, is a concentrated Gaussian distribution expressed in normal coordinates, given by
\begin{align*}
    \xi_k \sim \GP_{\hat{\xi}_{k|k}}(0,\Sigma_{k|k}).
\end{align*}

\subsection{Propagation}\label{sec:propagation}
The propagation step involves the propagation of the state estimate $\hat{\xi}_{k|k}$, which also acts as the reference point in the information state of the filter, using the full nonlinear model of the system
\begin{align}
    \hat{\xi}_{k+1|k} = F(\hat{\xi}_{k|k}, u_{k+1}).
\end{align}

The propagation of the covariance estimate requires the linearisation of the local error update equation.
We define the predicted error as
\begin{align}
    \varepsilon_{k+1|k}  := \xi_{k+1} \boxminus \hat{\xi}_{k+1|k}.
\end{align}

\begin{lemma}\label{lemma:propagation}
The linearised dynamics of $\varepsilon_{k+1|k}$ is given by
    \begin{align}
        &\varepsilon_{k+1|k} =  A_{k+1}\varepsilon_{k|k} + \jt{-1}{\hat{\xi}_{k+1|k}}{\xi^\diamond_{k+1}}\kappa_{k+1} \nonumber\\
        & \hspace{4.5cm} + \Order(\lvert \varepsilon_{k|k}, \kappa_{k+1} \rvert^2),
    \end{align}
where $A_{k+1}$ is given by
    \begin{align}
        A_{k+1} := \tD \vartheta_{\hat{\xi}_{k+1|k}}(\hat{\xi}_{k+1|k}) \cdot \tD F_{u_{k+1}}(\hat{\xi}_{k|k})\cdot \tD \vartheta^{-1}_{\hat{\xi}_{k|k}}(0).
    \end{align}
\end{lemma}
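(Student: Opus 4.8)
The plan is to write $\varepsilon_{k+1|k}$ explicitly as a smooth function of the two perturbation variables $\varepsilon_{k|k}$ and $\kappa_{k+1}$ and to Taylor expand it about the origin, performing the two linearisations one after the other. Using \eqref{eq:boxplus}, \eqref{eq:boxminus} and the model \eqref{eq:system}, and writing $\xi_k = \vartheta^{-1}_{\hat{\xi}_{k|k}}(\varepsilon_{k|k})$ together with the virtual state $\xi^\diamond_{k+1} = F(\xi_k,u_{k+1})$, one has
\[
\varepsilon_{k+1|k} \;=\; \vartheta_{\hat{\xi}_{k+1|k}}(\xi_{k+1}) \;=\; \vartheta_{\hat{\xi}_{k+1|k}}\!\bigl(\vartheta^{-1}_{\xi^\diamond_{k+1}}(\kappa_{k+1})\bigr),
\]
where $\xi^\diamond_{k+1}$ depends smoothly on $\varepsilon_{k|k}$ and satisfies $\xi^\diamond_{k+1}=\hat{\xi}_{k+1|k}$ when $\varepsilon_{k|k}=0$. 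This is just the standard error-state linearisation, read in normal coordinates.

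First I would expand in $\kappa_{k+1}$ with $\xi^\diamond_{k+1}$ kept symbolic. The first-order Taylor expansion of $\kappa \mapsto \vartheta_{\hat{\xi}_{k+1|k}}(\vartheta^{-1}_{\xi^\diamond_{k+1}}(\kappa))$ about $\kappa=0$ has constant term $\vartheta_{\hat{\xi}_{k+1|k}}(\xi^\diamond_{k+1})$ and a linear term whose coefficient, applied to $w$, is $\left.\ddt\right|_{t=0}\vartheta_{\hat{\xi}_{k+1|k}}(\vartheta^{-1}_{\xi^\diamond_{k+1}}(tw))$; comparing with the definition of the left-inverse Jacobian in Section~\ref{sec:differential_exp} this coefficient is exactly $\jt{-1}{\hat{\xi}_{k+1|k}}{\xi^\diamond_{k+1}}$. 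Hence, with a remainder uniform in $\varepsilon_{k|k}$ near $0$ by smoothness,
\[
\varepsilon_{k+1|k} \;=\; \vartheta_{\hat{\xi}_{k+1|k}}(\xi^\diamond_{k+1}) \;+\; \jt{-1}{\hat{\xi}_{k+1|k}}{\xi^\diamond_{k+1}}\,\kappa_{k+1} \;+\; \Order(|\kappa_{k+1}|^2).
\]

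Next I would linearise the leading term $\vartheta_{\hat{\xi}_{k+1|k}}(\xi^\diamond_{k+1}) = \vartheta_{\hat{\xi}_{k+1|k}}\bigl(F(\vartheta^{-1}_{\hat{\xi}_{k|k}}(\varepsilon_{k|k}),u_{k+1})\bigr)$ in $\varepsilon_{k|k}$ about $0$. Its value at $\varepsilon_{k|k}=0$ is $\vartheta_{\hat{\xi}_{k+1|k}}(F(\hat{\xi}_{k|k},u_{k+1})) = \vartheta_{\hat{\xi}_{k+1|k}}(\hat{\xi}_{k+1|k}) = 0$, and by the chain rule its derivative at $\varepsilon_{k|k}=0$ equals $\tD\vartheta_{\hat{\xi}_{k+1|k}}(\hat{\xi}_{k+1|k})\cdot\tD F_{u_{k+1}}(\hat{\xi}_{k|k})\cdot\tD\vartheta^{-1}_{\hat{\xi}_{k|k}}(0) = A_{k+1}$, the $\vartheta$-derivatives at the respective base points being the coordinate identifications since $\tD\Exp_{\hat{\xi}}(0)=\id$. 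Substituting $\vartheta_{\hat{\xi}_{k+1|k}}(\xi^\diamond_{k+1}) = A_{k+1}\varepsilon_{k|k}+\Order(|\varepsilon_{k|k}|^2)$ and collecting the $\Order(|\varepsilon_{k|k}|^2)$ and $\Order(|\kappa_{k+1}|^2)$ remainders into $\Order(|\varepsilon_{k|k},\kappa_{k+1}|^2)$ yields the claim; one may further replace $\jt{-1}{\hat{\xi}_{k+1|k}}{\xi^\diamond_{k+1}}$ by the identity at the cost of an extra mixed term $\Order(|\varepsilon_{k|k}|\,|\kappa_{k+1}|)$, which is also second order, so the stated form is the geometrically faithful one. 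The main obstacle is the order bookkeeping: one must justify carrying out the two linearisations sequentially with $\xi^\diamond_{k+1}$ frozen in the first, and check that every dropped contribution (in particular the mixed $\varepsilon_{k|k}$ and $\kappa_{k+1}$ terms hidden in the $\varepsilon_{k|k}$-dependence of $\jt{-1}{\hat{\xi}_{k+1|k}}{\xi^\diamond_{k+1}}$ and of the $\kappa$-remainder) is genuinely $\Order(|\varepsilon_{k|k},\kappa_{k+1}|^2)$; once the Section~\ref{sec:differential_exp} definitions are unwound, identifying the two coefficients is routine.
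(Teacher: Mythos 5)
Your proposal is correct and follows essentially the same route as the paper: both express $\varepsilon_{k+1|k}$ as $\vartheta_{\hat{\xi}_{k+1|k}}\circ\vartheta^{-1}_{\xi^\diamond_{k+1}}(\kappa_{k+1})$ with $\xi^\diamond_{k+1}=F(\vartheta^{-1}_{\hat{\xi}_{k|k}}(\varepsilon_{k|k}),u_{k+1})$ and obtain $A_{k+1}$ by the chain rule at $\varepsilon_{k|k}=0$. The only difference is in the noise term: you perform a direct first-order Taylor expansion in $\kappa_{k+1}$ and read off the coefficient $\jt{-1}{\hat{\xi}_{k+1|k}}{\xi^\diamond_{k+1}}$ from the definition of the left-inverse Jacobian, whereas the paper invokes Lemma~\ref{lemma:standard_linearisation} to replace $\kappa_{k+1}$ by a modified noise $\kappa^\diamond_{k+1}$ with covariance $Q^\diamond_{k+1}=\jt{-1}{\hat{\xi}_{k+1|k}}{\xi^\diamond_{k+1}}\,Q_{k+1}\,\jt{-\top}{\hat{\xi}_{k+1|k}}{\xi^\diamond_{k+1}}$ — the same linear map, arrived at stochastically rather than deterministically; your version is slightly more self-contained for the lemma as literally stated, and your remarks on the mixed-order terms hidden in the $\varepsilon_{k|k}$-dependence of $\jt{-1}{\hat{\xi}_{k+1|k}}{\xi^\diamond_{k+1}}$ are a point the paper glosses over.
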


\begin{proof}
    The predicted error can be written
    \begin{align}
        \varepsilon_{k+1|k} &= \xi_{k+1}\boxminus\hat{\xi}_{k+1|k}, \notag  \\
        &= (F(\xi_{k}, u_{k+1})\boxplus \kappa_{k+1} )\boxminus F(\hat{\xi}_k, u_{k+1}), \notag  \\
                &= (F(\hat{\xi}_k\boxplus \varepsilon_{k|k}, u_{k+1}) \boxplus \kappa_{k+1} )\boxminus F(\hat{\xi}_k, u_{k+1}). \label{eq:predicterror_dynm_todo}
    \end{align}

Substitude the $\boxplus/\boxminus$ operators with the normal coordinate chart $\vartheta$, one has
\begin{align}
    \varepsilon_{k+1|k}&= \vartheta_{\hat{\xi}_{k+1|k}} \vartheta^{-1}_{F_{u_{k+1}} (\vartheta^{-1}_{\hat{\xi}_{k|k}}(\varepsilon_{k|k}))}(\kappa_{k+1}),\nonumber\\
    &=\vartheta_{\hat{\xi}_{k+1|k}} (F_{u_{k+1}} (\vartheta^{-1}_{\hat{\xi}_{k|k}}(\varepsilon_{k|k}))\boxplus\kappa_{k+1}). \label{eq:prop_true}
\end{align}
Ideally, one would like to have the error dynamics in the form of
\begin{align}
    \varepsilon_{k+1|k}\approx \vartheta_{\hat{\xi}_{k+1|k}} (F_{u_{k+1}} (\vartheta^{-1}_{\hat{\xi}_{k|k}}(\varepsilon_{k|k}))) + \kappa_{k+1}, \label{eq:prop_wrong_todo}
\end{align}
such that the noise process $\kappa_{k+1}$ is linear and independent from the error state.
This is true on Euclidean spaces since the $\boxplus$ and $\boxminus$ are just vector addition and subtraction.
On a geometric manifold we will appeal to Lemma \ref{lemma:standard_linearisation} to get an approximation of the modified noise process $\kappa^\diamond_{k+1}\sim\GP_{\hat{\xi}_{k+1|k}}(0,Q^\diamond_{k+1})$ in the normal coordinate around $\hat{\xi}_{k+1|k}$.
The new noise covariance is given by
\begin{align}
    Q^\diamond_{k+1} = \jt{-1}{\hat{\xi}_{k+1|k}}{\xi^\diamond_{k+1}}\; Q_{k+1}\; \jt{-\top}{\hat{\xi}_{k+1|k}}{\xi^\diamond_{k+1}}.
\end{align}

Rewriting the error dynamics \eqref{eq:prop_true} one obtains
\begin{align}
    \varepsilon_{k+1|k}= \vartheta_{\hat{\xi}_{k+1|k}} (F_{u_{k+1}} (\vartheta^{-1}_{\hat{\xi}_{k|k}}(\varepsilon_{k|k}))) + \kappa^\diamond_{k+1}. \label{eq:prop_correct}
\end{align}
The formula for $A_{k+1}$ follows by applying the chain rule of differentiation to \eqref{eq:prop_correct} and evaluating at $\varepsilon_{k|k} = 0$.
\end{proof}

\begin{figure}[!htb]
    \centering
    \includegraphics[width=0.5\linewidth]{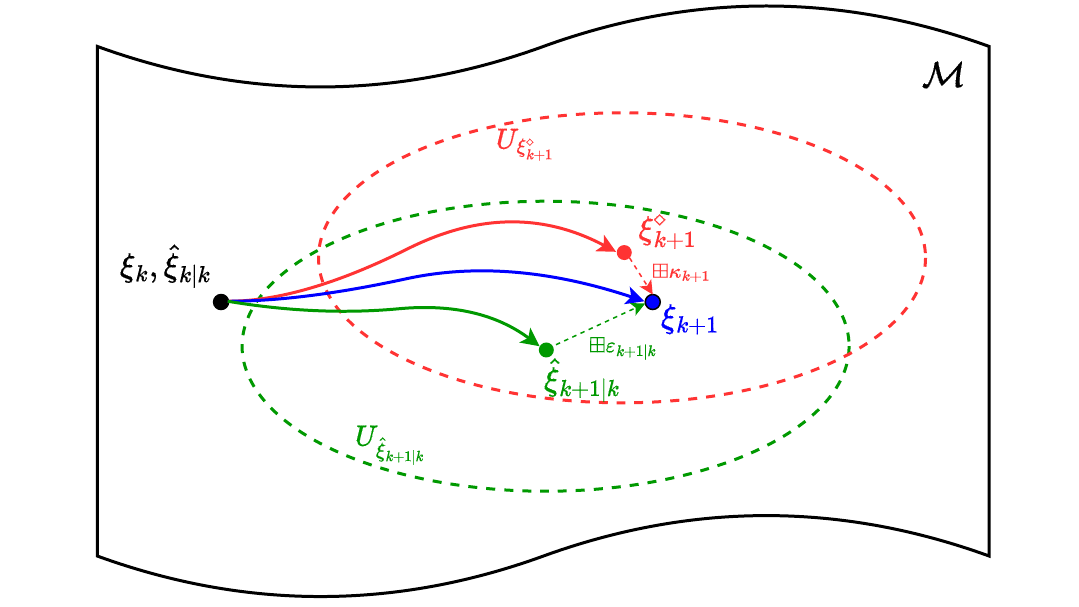}
    \caption{Demonstration of the filter propagation step. The true state $\xi_{k+1}$ and the nominal state $\hat{\xi}_{k+1|k}$ are in blue and green respectively. $\xi^\diamond_{k+1}$ is the ideal state which is propagated without the noise. The red and green dashed ellipses represent the domains of normal coordinates around $\xi^\diamond_{k+1}$ and $\hat{\xi}_{k+1|k}$. The red dashed arrow is the process noise $\kappa_{k+1}$ which is defined in $U_{\xi^\diamond_{k+1}}$. The green dashed arrow is the local error $\varepsilon_{k+1|k}$ which is defined in $U_{\hat{\xi}_{k+1|k}}$.}
    \label{fig:propagation}
 \end{figure}

The new information state of the filter after the propagation step becomes
\begin{align*}
    \xi_{k+1} \sim \GP_{\hat{\xi}_{k+1|k}}(0,\Sigma_{k+1|k}),
\end{align*}
where the updated covariance is given by (Lemma~\ref{lemma:propagation})
\begin{align*}
    &\Sigma_{k+1|k} = A_{k+1}\Sigma_{k|k}A_{k+1}^\top + Q^\diamond_{k+1}.
\end{align*}
This covariance update involves two parts of coordinate transforms.
The first transformation maps the covariance of the prior $\Sigma_{k|k}$ from $\hat{\xi}_{k|k}$ to $\hat{\xi}_{k+1|k}$.
This translation is handled by the state matrix $A_{k+1}$ implicitly.
The second transform rewrites the process noise covariance $Q_{k+1}$ located at $\xi^{\diamond}_{k+1}$ to $Q_{k+1}^+$ located at $\hat{\xi}_{k+1|k}$.

In practice $\xi^\diamond_{k+1}$ is not available and the Jacobian $ J_{\hat{\xi}_{k+1|k}}(\xi^\diamond_{k+1})$ is unknown.
At the moment, the authors do not have a way to overcome this difficulty and the best approximation that we are aware of is to set $Q_{k+1}$ to approximate $Q^\diamond_{k+1}$.
\begin{align}
    &\Sigma_{k+1|k} = A_{k+1}\Sigma_{k|k}A_{k+1}^\top + Q_{k+1}.
    \label{eq:Sigma_k+1|k}
\end{align}

\subsection{Update}\label{sec:update}
The update step is where the filter fuses the measurements $y_{k+1}\in\calN$ with the predicted prior $\xi_{k+1} \sim \GP_{\hat{\xi}_{k+1|k}}(0,\Sigma_{k+1|k})$ through a Bayesian fusion.
The noise model for the measurements can be written as
\[
    y_{k+1}\sim\GP_{h(\xi_{k+1})}(0, R_{k+1}),
\]
which is a concentrated Gaussian distribution on the normal coordinates around $h(\xi_{k+1})\in\calN$.

Applying the Bayes law the update step can be interpreted as solving for the parameters $\mu^+_{k+1|k}$ and $\Sigma^+_{k+1|k}$ that best approximate the posterior distribution.
Taking the log-likelihood, this can be written as finding
\begin{align}
    &\lVert \vartheta_{\hat{\xi}_{k+1|k}}(\xi_{k+1}) \rVert^2_{\Sigma^{-1}_{k+1|k}} + \lVert \varphi_{h(\xi_{k+1})}(y_{k+1}) \rVert^2_{R^{-1}_{k+1}} \nonumber\\
    &\hspace{2.5cm}\approx \lVert \vartheta_{\hat{\xi}_{k+1|k}}(\xi_{k+1})-\mu^+_{k+1|k} \rVert^2_{\Sigma_{k+1|k}^{+^{-1}}}.\label{eq:update_goal}
\end{align}

Define the local error $\varepsilon_{k+1}:= \vartheta_{\hat{\xi}_{k+1|k}}(\xi_{k+1}) \in \tT_{\hat{\xi}_{k+1|k}} \calM$, one has
\begin{align*}
    \eta(\varepsilon_{k+1}):=\varphi_{h(\xi_{k+1})}(y_{k+1}) = \varphi_{h(\vartheta^{-1}_{\hat{\xi}_{k+1|k}}(\varepsilon_{k+1}))}(y_{k+1}).
\end{align*}
Let $\hat{y}_{k+1}:=h(\hat{\xi}_{k+1|k})$ denote the predicted measurement.
Taking the Taylor expansion of $\eta(\varepsilon_{k+1})$ around $\varepsilon_{k+1}=0$ yields
\begin{align}\label{eq:measurement_update_taylor}
    &\eta(\varepsilon_{k+1}) \approx \varphi_{\hat{y}_{k+1}}(y_{k+1}) \nonumber\\
    &\hspace{0.3cm} + \left.\tD_z\right|_{\hat{y}_{k+1}} \varphi_z(y_{k+1}) \left.\tD_\zeta\right|_{\hat{\xi}_{k+1|k}} h(\zeta) \left.\tD_\epsilon\right|_{0} \vartheta^{-1}_{\hat{\xi}_{k+1|k}}(\epsilon) [\varepsilon_{k+1}].
\end{align}

Using the result in Lemma \ref{lemma:dxlog} and that $\left.\tD_\epsilon\right|_{0} \vartheta^{-1}_{\hat{\xi}_{k+1|k}}(\epsilon)$ is the identity at $\epsilon=0$, one can rewrite \eqref{eq:measurement_update_taylor} as
\begin{align}\label{eq:measurement_update_taylor2}
    &\eta(\varepsilon_{k+1}) \approx \varphi_{\hat{y}_{k+1}}(y_{k+1})\nonumber\\
     &\hspace{1cm}- \jt{-1}{\hat{y}_{k+1}}{y_{k+1}}\,\jp{}{\hat{y}_{k+1}}{y_{k+1}} \, C_{\hat{\xi}_{k+1|k}}\,\varepsilon_{k+1},
\end{align}
with $C_{\hat{\xi}_{k+1|k}} = \left.\tD_\zeta\right|_{\hat{\xi}_{k+1|k}} h(\zeta)$ the output Jacobian matrix.
Substitude \eqref{eq:measurement_update_taylor2} into \eqref{eq:update_goal} and solve for $\Sigma^+_{k+1|k}$ by matching terms yields
\begin{align}\label{eq:update_sigma}
    \Sigma^+_{k+1|k} = \left(\Sigma^{-1}_{k+1|k} + C_{\hat{\xi}_{k+1|k}}^\top R_\dagger^{-1} C_{\hat{\xi}_{k+1|k}}\right)^{-1},
\end{align}
with
\begin{align*} 
R_\dagger & = \jp{-1}{\hat{y}_{k+1}}{y_{k+1}}\, \jt{}{\hat{y}_{k+1}}{y_{k+1}}\, \\
 & \hspace{3cm} R\,\jt{\top}{\hat{y}_{k+1}}{y_{k+1}}\, \jp{-\top}{\hat{y}_{k+1}}{y_{k+1}}.
\end{align*} 
Using the matrix inversion lemma, it is straightforward to show that \eqref{eq:update_sigma} is equivalent to the usual Kalman filter covariance update equation,
\[\Sigma^+_{k+1|k}=(I-K_{k+1}C_{\hat{\xi}_{k+1|k}})\Sigma_{k+1|k},\]
where $K_{k+1}=\Sigma_{k+1|k} C_{\hat{\xi}_{k+1|k}}^\top(C_{\hat{\xi}_{k+1|k}}\Sigma_{k+1|k} C_{\hat{\xi}_{k+1|k}}^\top+R_\dagger)^{-1}$ is the Kalman gain.

Using \eqref{eq:measurement_update_taylor2}, one can also solve for the mean $\mu^+_{k+1|k}$ by matching terms in \eqref{eq:update_goal}, which yields
\begin{align}\label{eq:update_mu}
    \mu^+_{k+1|k} = -K_{k+1} \cdot \varphi_{\hat{y}_{k+1}}y_{k+1}.
\end{align}

Note that the Kalman update is performed in the normal coordinates around $\hat{\xi}_{k+1|k}$.
Hence, the posterior is also expressed in the same coordinate, that is, with the reference point remaining at $\hat{\xi}_{k+1|k}$.
The concentrated Gaussian distribution that is associated with the information state of the filter can be written as $\GP_{\hat{\xi}_{k+1|k}}(\mu^+_{k+1|k},\Sigma^+_{k+1|k})$.

\subsection{Reset}\label{sec:reset}
There have been several works on the covariance reset in error-state Kalman filters.
It was first mentioned by Markley \cite{markley2003attitude} in the context of multiplicative EKF, recently generalised by Muller \etal \cite{mueller2017covariance}\cite{gill2020full}.
The authors proposed a covariance reset step using parallel transport in \cite{Mahony_2022}\cite{ge2022equivariant} for filtering on homogeneous spaces, which was extended in \cite{ge2024geometric} to include the curvature tensor.
The same concept can be extended onto a smooth manifold.

\begin{figure}[!htb]
    \centering
    \includegraphics[width=0.5\linewidth]{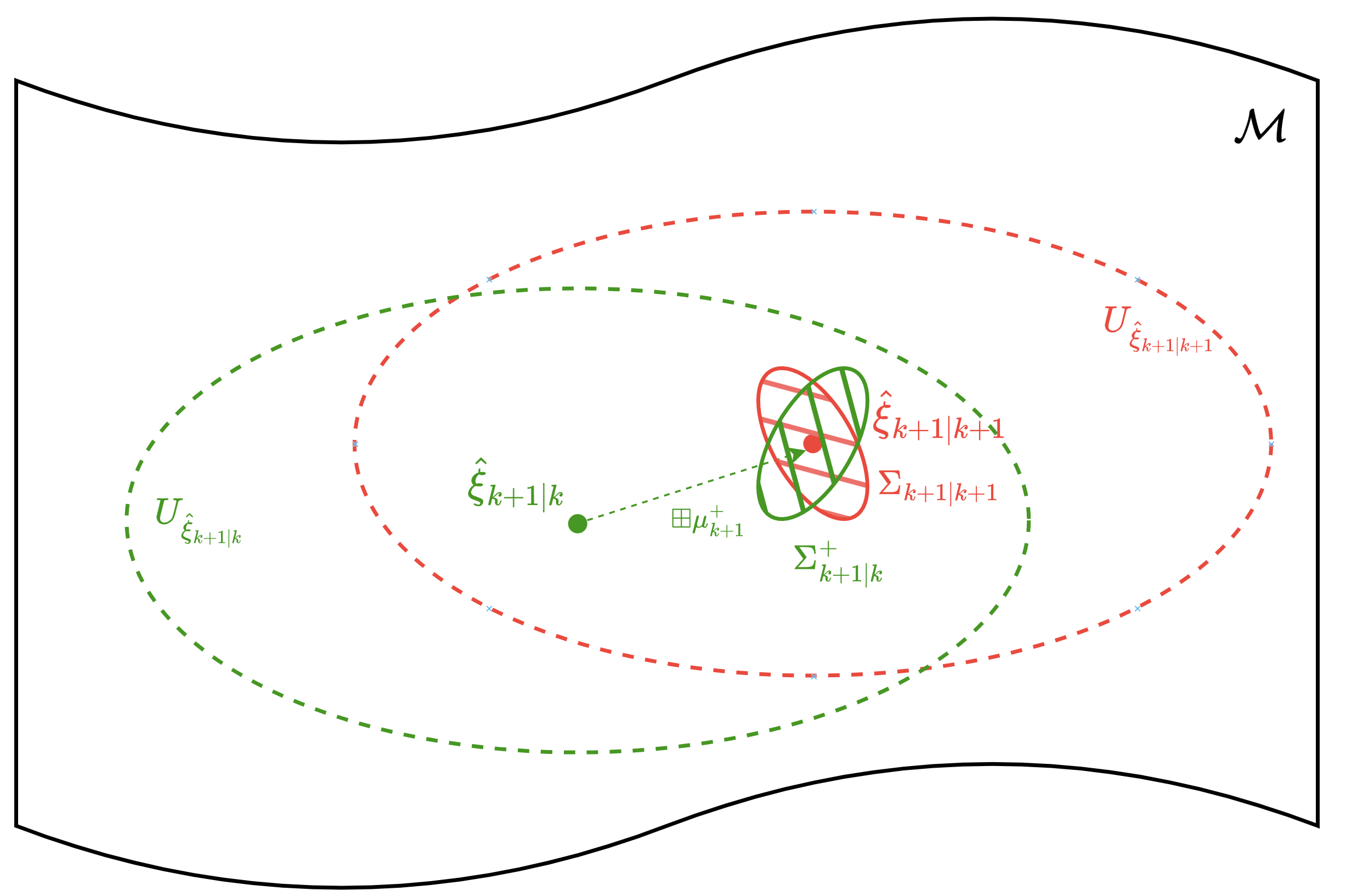}
    \caption{Demonstration of the filter reset step. The hatched ellipse in green represents the updated covariance $\Sigma^+_{k+1|k}$ which is expressed in normal coordinate around $\hat{\xi}_{k+1|k}$ with a non-zero mean $\mu^+_{k+1|k}$. The hatched ellipse in red is the covariance after reset $\Sigma_{k+1|k+1}$. It is expressed in the normal coordinate around $\hat{\xi}_{k+1|k+1}$ with a zero mean.}
    \label{fig:reset}
 \end{figure}

The information state of the filter after the update step is a concentrated Gaussian distribution in normal coordinates around $\hat{\xi}_{k+1|k}$, with non-zero mean $\mu^+_{k+1|k}$ and the updated covariance estimate $\Sigma^+_{k+1|k}$.
At the next time step $t=k+2$ in the filter, the linearisation will be evaluated around $\hat{\xi}_{k+1|k+1}$.
That is, the propagation outlined in Section~\ref{sec:propagation} requires a distribution in the normal coordinate around $\hat{\xi}_{k+1|k+1}$.
The goal of the reset step is to find $(\hat{\xi}_{k+1|k+1}, \Sigma^+_{k+1|k+1})$ that satisfy (see Fig.~\ref{fig:reset})
\begin{align}\label{eq:reset}
    \xi_{k+1} \sim \GP_{\hat{\xi}_{k+1|k}}(\mu^+_{k+1|k},\Sigma^+_{k+1|k})\approx\GP_{\hat{\xi}_{k+1|k+1}}(0,\Sigma_{k+1|k+1}).
 \end{align}
This step is very much in the original spirit of Smith's contribution \cite{smith1962application,mcgee1985discovery}.

The reset of the reference point is trivial, since one can get the updated state estimate $\hat{\xi}_{k+1|k+1}$ by computing
\[
    \hat{\xi}_{k+1|k+1} = \hat{\xi}_{k+1|k} \boxplus \mu_{k+1},
\]
however, the reset of the covariance estimate involves a coordinate transform, which is given by
\begin{align}
    \Sigma_{k+1|k+1} & = J_{\hat{\xi}_{k+1|k}}(\hat{\xi}_{k+1|k+1})\;\Sigma^+_{k+1|k+1}\;J_{\hat{\xi}_{k+1|k}}^\top(\hat{\xi}_{k+1|k+1})\label{eq:Sigma_reset}.
\end{align}

Unlike the geometric modification in the propagation and update steps, the information required in the reset step is always available in practice, hence \eqref{eq:Sigma_reset} can be solved explicitly.

Algorithm \ref{alg:geometric_EKF} summarises the proposed EKF methodology with each step outlined.

\begin{algorithm}[h]
    \caption{Geometric EKF on manifold}
    \KwIn{Initial state estimate $\hat{\xi}_0$ and covariance $\Sigma_0$.}
    \KwOut{Sequence of state and covariance estimates $\{ \hat{\xi}_k,\Sigma_k \}$.}
    \For{$k = 0, 1, 2, \ldots$}{
        \vspace{2mm}
        \textbf{\{Propagation step\}}\\
        \vspace{1mm}
        Define process noise covariance $Q_{k+1}$\\
        \vspace{2mm}
        $A_{k+1}\leftarrow\tD \vartheta_{\hat{\xi}_{k+1|k}}(\hat{\xi}_{k+1|k}) \cdot \tD F_{u_{k+1}}(\hat{\xi}_{k|k})\cdot \tD \vartheta^{-1}_{\hat{\xi}_{k|k}}(0)$\\
        $\hat{\xi}_{k+1}\leftarrow F(\hat{\xi}_{k}, u_{k+1})$\\
        $\Sigma_{k+1|k} \leftarrow A_{k+1}\Sigma_{k|k}A_{k+1}^\top + Q_{k+1}$\\
        \vspace{2mm}
        \If{Measurement $y_{k+1}$ available}
        {
            \vspace{2mm}
            \textbf{\{Update step\}}\\
            \vspace{1mm}
            Define measurement noise covariance $R_{k+1}$\\
            Compute the maps $\jt{-1}{\hat{y}_{k+1}}{y_{k+1}}, \jp{}{\hat{y}_{k+1}}{y_{k+1}}$ \\
            \vspace{2mm}
            $C_{\hat{\xi}_{k+1|k}}\leftarrow \tD h(\hat{\xi}_{k+1_k})$\\
            \vspace{3mm}
            $ R_\dagger \leftarrow  \jp{-1}{\hat{y}_{k+1}}{y_{k+1}} \jt{}{\hat{y}_{k+1}}{y_{k+1}} $\\
            $\hspace{2cm}R\jt{\top}{\hat{y}_{k+1}}{y_{k+1}}\jp{-\top}{\hat{y}_{k+1}}{y_{k+1}}$\\
            \vspace{3mm}
            $ K_{k+1} \leftarrow \Sigma_{k+1|k} C_{\hat{\xi}_{k+1|k}}^\top(C_{\hat{\xi}_{k+1|k}}\Sigma_{k+1|k} C_{\hat{\xi}_{k+1|k}}^\top+R_\dagger)^{-1}$\\
            \vspace{2mm}
            $\Sigma^+_{k+1|k} \leftarrow (I-K_{k+1}C_{\hat{\xi}_{k+1|k}})\Sigma_{k+1|k}$\\
            $\mu^+_{k+1|k}\leftarrow -K_{k+1} \cdot \varphi_{h(\hat{\xi}_{k+1|k})}(y_{k+1})$\\
            \vspace{1mm}
            \textbf{\{Reset step\}}\\
            \vspace{1mm}
            $\hat{\xi}_{k+1|k+1} \leftarrow \vartheta^{-1}_{\hat{\xi}_{k+1|k}}(\mu^+_{k+1|k})$\\
            Compute the map $\jt{}{\hat{\xi}_{k+1|k}}{\hat{\xi}_{k+1|k+1}}$\\
            $\Sigma_{k+1|k+1} \leftarrow \jt{}{\hat{\xi}_{k+1|k}}{\hat{\xi}_{k+1|k+1}}\Sigma^+_{k+1|k} \jt{\top}{\hat{\xi}_{k+1|k}}{\hat{\xi}_{k+1|k+1}}$\\
        }
    }
    \label{alg:geometric_EKF}

\end{algorithm}

\section{Iterated Extended Kalman Filters}\label{sec:iterated_EKF}

Linearisation error is the major issue in designing high performance filters using the EKF formulation \cite{huang2007convergence}.
The EKF filter marginalises out and discards all previous measurements $\{u_1, \ldots, u_k\}$, and $\{y_1, \ldots, y_k\}$ by storing the information in this data in the parameters $(\hat{\xi}_{k|k}, \Sigma_{k|k})$ of the information state.
Any linearisation error in the matrices $A_k$, $B_k$ or $C_k$, or the Jacobians $J_{\hat{\xi}_{k|k-1}}$, $J_{\hat{y}}$, and $T_{\hat{y}}$ get ``baked into'' the covariance parameter $\Sigma_{k|k}$ of the information state during the update step of the EKF.
This can be particularly harmful during the transient phase when large linearisation error can corrupt the covariance estimate leading to poor filter consistency and significantly degradinging robustness and filter convergence.
For high performance filters, a common approach to address this issue is to iterate the update step in the filter for each prediction step, re-linearizing the system equations repeatedly, until the linearisation error is minimised.
This algorithm is termed the iterated EKF \cite{bell1993iterated} algorithm.
In this section, we modify the proposed EKF methodology in Sec.~\ref{sec:update} to allow for an arbitrary linearisation point.
Once this is available, it is straightforward to pose a geometric iterated EKF algorithm.

\subsection{Choosing the linearisation point}\label{sec:linearisation_point}

Consider the question of fusing a measurement $y_{k+1} \sim \GP_{h(\xi_{k+1})}(0,R_{k+1})$ with a prior $\GP_{\hat{\xi}_{k+1|k}}(0,\Sigma_{k+1|k})$ where the linearisation point  $\xiop\in\calM$ used for the update step is not the predicted prior state-estimate $\hat{\xi}_{k+1|k}$.
The first step is to express the prior distribution, that is defined as a concentrated Gaussian $\GP_{\hat{\xi}_{k+1|k}}(0,\Sigma_{k+1|k})$ at the reference point $\hat{\xi}_{k+1|k}$, around the new linearisation point $\xiop$.

Define $\varepsilon_{k+1}:=\vartheta_{\xiop}(\xi_{k+1})$, one has $\vartheta_{\hat{\xi}_{k+1|k}}(\xi_{k+1}) = \vartheta_{\hat{\xi}_{k+1|k}}(\vartheta_{\xiop}^{-1}(\varepsilon_{k+1}))$.
Taking the Taylor expansion with respect to $\varepsilon$ around $\varepsilon=\vartheta_{\xiop}(\hat{\xi}_{k+1|k})$ yields
\begin{align*}
&\vartheta_{\hat{\xi}_{k+1|k}}(\xi_{k+1})\\
&\hspace{1cm}\approx \left.\tD_v\right|_{\vartheta_{\xiop}(\hat{\xi}_{k+1|k})}\vartheta^{-1}_{\xiop}(v)[\varepsilon_{k+1}-\vartheta_{\xiop}(\hat{\xi}_{k+1|k})]\\
&\hspace{1cm}= \jt{}{\xiop}{\hat{\xi}_{k+1|k}}\left(\varepsilon_{k+1}-\vartheta_{\xiop}(\hat{\xi}_{k+1|k})\right).
\end{align*}
Applying Lemma~\ref{lemma:standard_linearisation}, one can transform the prior distribution into the normal coordinate around $\xiop$, that is,
\begin{align}\label{eq:prior_linearisation}
    \lVert \vartheta_{\hat{\xi}_{k+1|k}}(\xi_{k+1}) \rVert^2_{\Sigma_{k+1|k}^{-1}} \approx \lVert \varepsilon_{k+1} - \vartheta_{\xiop}(\hat{\xi}_{k+1|k}) \rVert^2_{{\Sigma^{\dagger\;\;\;{-1}}_{k+1|k}}},
\end{align}
with $\Sigma^\dagger_{k+1|k} = \jt{-1}{\xiop}{\hat{\xi}_{k+1|k}}\;\Sigma_{k+1|k}\; \jt{-\top}{\xiop}{\hat{\xi}_{k+1|k}}$.

Similarly, the measurement likelihood can be written in coordinates $\eta(\varepsilon_{k+1}):=\varphi_{h(\xi_{k+1})}(y_{k+1}) = \varphi_{h(\vartheta_{\xiop}^{-1}(\varepsilon))}(y_{k+1})$.
Taking the Taylor expansion of $\eta(\varepsilon_{k+1})$ around $\varepsilon_{k+1}=0$ yields
\begin{align}
    \eta(\varepsilon_{k+1}) &\approx \varphi_{\check{y}_{k+1}}(y_{k+1}) + \left.\tD_z\right|_{\check{y}_{k+1}} \varphi_z(y_{k+1}) \nonumber\\
    &\quad \cdot \left.\tD_\zeta\right|_{\check{\xi}_{k+1}} h(\zeta) \cdot \left.\tD_\epsilon\right|_{0} \vartheta^{-1}_{\check{\xi}_{k+1}}(\epsilon) [\varepsilon_{k+1}],\label{eq:measurement_update_taylor_iterate}
\end{align}
where $\check{y}_{k+1}:=h(\xiop)$ is the linearisation point on $\calN$.
Using Lemma~\ref{lemma:dxlog}, one can rewrite \eqref{eq:measurement_update_taylor_iterate} as
\begin{align}\label{eq:measurement_update_taylor2_iterate}
    &\eta(\varepsilon_{k+1}) \approx \varphi_{\check{y}_{k+1}}(y_{k+1}) \nonumber\\
    & \hspace{1.8cm}- \jt{-1}{\check{y}_{k+1}}{y_{k+1}}\, \jp{}{\check{y}_{k+1}}{y_{k+1}}\, C_{\xiop}\,\varepsilon_{k+1}.
\end{align}
Following the methodology in Sec.~\ref{sec:update}, substituting \eqref{eq:prior_linearisation} and \eqref{eq:measurement_update_taylor2_iterate} into \eqref{eq:update_goal} and solving for $\Sigma_{k+1|k}^+$ yields
\begin{align}\label{eq:update_sigma_iterate}
    \Sigma_{k+1|k}^+ = \left(\Sigma_\dagger^{-1} + C_{\xiop}^\top R_\dagger^{-1} C_{\xiop} \right)^{-1},
\end{align}
where
\begin{align*}
    &R_\dagger = \jp{-1}{\hat{y}_{k+1}}{y_{k+1}} \jt{}{\hat{y}_{k+1}}{y_{k+1}} \\
    &\hspace{3cm}R_{k+1} \jt{\top}{\hat{y}_{k+1}}{y_{k+1}}\jp{-\top}{\hat{y}_{k+1}}{y_{k+1}}.
\end{align*}
Using \eqref{eq:update_sigma_iterate}, one can solve for the mean $\mu_{k+1|k}^+$ by matching terms in \eqref{eq:update_goal}, which yields
\begin{align}\label{eq:update_mu_iterate}
    &\mu_{k+1|k}^+ = -\Sigma_{k+1|k}^+ \bigg(
     C_{\xiop}^\top \jp{\top}{\check{y}_{k+1}}{y_{k+1}} \jt{\top}{\check{y}_{k+1}}{y_{k+1}}  \nonumber \\
    & \hspace{2.5cm}R_{k+1}^{-1} \varphi_{\check{y}_{k+1}}(y_{k+1})+ \Sigma_\dagger^{-1} \vartheta_{\xiop}(\hat{\xi}_{k+1|k})
    \bigg).
\end{align}

\subsection{Geometric Iterated EKF}\label{sec:geometric_itekf}

The key idea in the iterated EKF is to use a linearisation point  $\xiop\in\calM$ that is as close to the true state $\xi_{k+1}$ as possible.
This linearisation point must chosen by using some additional information; that is, information in addition to the prior $\hat{\xi}_{k+1|k}$ that already summarises the information in all past measurements.
The obvious choice is to use the new measurement $y_{k+1}$ itself.
For example, given the prior $\hat{\xi}_{k+1|k}$ and the measurement $y_{k+1}$, the update step outlined in Section~\ref{sec:update} generates a new state estimate $\hat{\xi}_{k+1|k+1}^+:=\vartheta^{-1}_{\xiop}(\mu^+_{k+1|k})$ (before reset) that should be a better approximation of the true state $\xi_{k+1}$ than the prior.
If we forget for the moment how $\hat{\xi}_{k+1|k+1}^+$ was obtained, then it makes sense to use $\xiop = \hat{\xi}_{k+1|k+1}^+$ as a linearisation point and recompute the update.
Note that this is not introducing stochastic dependence in the filter, since the
linearisation point is not fused in the filter, it is used only to reduce linearisation error, and additionally in the geometric EKF, improve conditioning of the normal coordinates used.
If improving the linearisation point in turn leads to a better estimate of the state, then the process can be repeated, leading to the iterated EKF algorithm, see Algorithm~\ref{alg:iterate_EKF}.

\begin{algorithm}[h]

    \caption{Geometric Iterated EKF on manifold}
    \KwIn{Initial state estimate $\hat{\xi}_0$ and covariance $\Sigma_0$.}
    \KwOut{Sequence of state and covariance estimates $\{ \hat{\xi}_k,\Sigma_k \}$.}
    \For{$k = 0, 1, 2, \ldots$}
    {
        \vspace{2mm}
        \textbf{\{Propagation step\}}\\
        \vspace{1mm}
        Define process noise covariance $Q_{k+1}$\\
        \vspace{2mm}
        $A_{k+1}\leftarrow\tD \vartheta_{\hat{\xi}_{k+1|k}}(\hat{\xi}_{k+1|k}) \cdot \tD F_{u_{k+1}}(\hat{\xi}_{k|k})\cdot \tD \vartheta^{-1}_{\hat{\xi}_{k|k}}(0)$\\
        $\hat{\xi}_{k+1|k}\leftarrow F(\hat{\xi}_{k|k}, u_{k+1})$\\
        $\Sigma_{k+1|k} \leftarrow A_{k+1}\Sigma_{k|k}A_{k+1}^\top + Q_{k+1}$\\
        \vspace{2mm}
        \If{Measurement $y_{k+1}$ available}
        {
            \vspace{2mm}
            \textbf{\{Iterated update step\}}\\
            \vspace{1mm}
            Define measurement noise covariance $R_{k+1}$\\
            Define maximal number of iterations $N_{k+1}$\\
            $i\leftarrow 0$\\
            $\xiop^0\leftarrow \hat{\xi}_{k+1|k}$\\
            \vspace{2mm}
            \While{$i < N_{k+1}$}
            {
                \vspace{2mm}
                $C_{\xiop^i}\leftarrow \tD h(\xiop^i)$\\
                \vspace{2mm}Compute the maps $\jt{}{\hat{\xi}_{k+1|k}}{\xiop^i}$, $\jt{}{\check{y}_{k+1}^i}{y_{k+1}}$, and $\jp{}{\check{y}_{k+1}^i}{y_{k+1}}$\\
                \vspace{2mm}Compute the step $\mu_{k+1|k}^i$ with \eqref{eq:update_mu_iterate}\\
                $\xiop^{i+1}\leftarrow\vartheta^{-1}_{\xiop^i}(\mu_{k+1|k}^i)$\\
                $i\leftarrow i+1$\\
            }
            \vspace{1mm}
            $\xiop\leftarrow \xiop^i$\\
            Compute $\Sigma_{k+1|k}^+$ with \eqref{eq:update_sigma_iterate}\\
            Compute $\mu_{k+1|k}^+$ with \eqref{eq:update_mu_iterate}\\
            \vspace{2mm}
            \textbf{\{Reset step\}}\\
            \vspace{1mm}
            $\hat{\xi}_{k+1|k+1} \leftarrow \vartheta^{-1}_{\xiop}(\mu_{k+1|k}^+)$\\
            Compute the Jacobian $\jt{}{\xiop}{\hat{\xi}_{k+1|k+1}}$\\
            $\Sigma_{k+1|k+1} \leftarrow \jt{}{\xiop}{\hat{\xi}_{k+1|k+1}}\Sigma_{k+1|k}^+ \jt{\top}{\xiop}{\hat{\xi}_{k+1|k+1}}$\\
        }
    }
    \label{alg:iterate_EKF}
\end{algorithm}

    \begin{remark}
        Applying an analogous derivation to that given in \cite{bell1993iterated}, adapted to the geometric setting, it is straightforward to show that the limit point of the iterated update step in Algorithm~\ref{alg:iterate_EKF} computes the the Maximum a Posteriori (MAP) estimate of the inference problem associated with the data fusion.
        This property depends on the Jacobians and proper handling of the geometry and does not hold in general.
     \end{remark}

\section{Case Study: Extended Pose Group}\label{sec:simulation}
In this section we evaluate the performance of the proposed EKF methodology on a simplified inertial navigation system (INS) problem on the extended pose group $\SE_2(3)$.

\subsection{System definition}
Consider estimating the pose and velocity of a robotic vehicle moving relative to a fixed reference.
For example, an aerial vehicle moving relative to a frame fixed to the earths surface, ignoring the earths rotation.
The state of the robot is represented by the rigid body orientation, velocity and position in the global inertial frame of reference, denoted by $\bR\in\SO(3)$, $\bv\in\R^3$ and $\bp\in\R^3$ respectively.
The state space is the extended pose matrix Lie-group $\SE_2(3)$, where we write an element of the state space as
\[
\xi:=\begin{bmatrix}
    \bR & \bv & \bp\\
   \mathbf{0}_{1\times 3} & 1 & 0\\
   \mathbf{0}_{1\times 3} & 0 & 1
\end{bmatrix}\in\SE_2(3).
\]

The robot is equipped with an IMU-type sensor providing bias-free angular velocity $\omega\in\R^3$ and acceleration measurements $a\in\R^3$ at 200 Hz, both corrupted with additive Gaussian noise $\kappa_\omega\sim \GP(0_{3\times 1}, Q_\omega)$ and $\kappa_a\sim\GP(0_{3\times 1},Q_a)$.
In the real-world, all such sensors would be corrupted by bias offsets, however, the focus of the present simulation is to demonstrate the relative performance of the geometric EKF not solve the INS problem, and we make this assumption to avoid complexity in the simulation study.

Define the input matrix $\mathbf{V}$, the gravity matrix $\mathbf{G}$, and $\mathbf{N}$ used to model the linear kinematics $\mathbf{v} = \dot{\mathbf{p}} $, by
\begin{align*}
\mathbf{V} & = \begin{bmatrix}
    \omega^\wedge & a & \mathbf{0}_{1\times 3}\\
    \mathbf{0}_{2\times 3}&\mathbf{0}_{2\times 1}&\mathbf{0}_{2\times 1}
\end{bmatrix}, \\
\mathbf{G} & = \begin{bmatrix}
    \mathbf{0}_{3\times 3} & g & \mathbf{0}_{1\times 3}\\
    \mathbf{0}_{2\times 3}&\mathbf{0}_{2\times 1}&\mathbf{0}_{2\times 1}
\end{bmatrix}, \\
\mathbf{N} & =\left[\begin{array}{ccc}
    \mathbf{0}_{3 \times 3} & \mathbf{0}_{3 \times 1} & \mathbf{0}_{3 \times 1} \\
    \mathbf{0}_{1 \times 3} & 0 & 1 \\
    \mathbf{0}_{1 \times 3} & 0 & 0
    \end{array}\right],
\end{align*}
where $g\in\R^3$ is the gravity vector.
The noise-free discrete-time system dynamics can be written:
\begin{align}\label{eq:sim_system_func}
    \xi_{k+1} = \exp(\delta t(\mathbf{G}-\mathbf{N}))\;\xi_k\;\exp(\delta t(\mathbf{V}+\mathbf{N})),
\end{align}
where $\delta t\in\R^+$ is the time interval, $\exp$ is the matrix exponential.

In a real-world problem the state measurements are typically position from a Global Navigation Satellite System (GNSS), possibly a GNSS velocity, magnetomoeters, barometric pressure,  etc.
Since our goal is to demonstrate relative performance of the algorithm rather than solve the INS filtering problem, we simplify the output model and assume that the pose $(\bR,\bp)$ of the robot in the global reference frame is measured at 10 Hz.
The output space $\calN$ is the pose group $\SE(3)$.
For this simplified output model, the measurement function is
\begin{align}\label{eq:sim_meas_func}
    y_k &= h(\xi_{k})\boxplus\nu_k\nonumber\\
        &= \exp_{\SE(3)}(\nu_k)\begin{bmatrix}
            \bR_k & \bp_k\\
            \mathbf{0}_{1 \times 3} & 1
        \end{bmatrix}, \quad\nu_k\sim\GP(0_{6\times 1},R_k).
\end{align}
The measurement is corrupted with noise in the exponential coordinate, assumed to be right-invariant.

\subsection{Implementation}
In this simulation, we conduct a Monte-Carlo simulation, including 1000 runs of a simulated mobile robot moving in a Lissajous trajectory of 60 seconds length.
The IMU noise is randomly generated following Gaussian distributions, with standard deviation $0.001\text{rad/s}\sqrt{\text{s}}$ for the angular velocity, and $0.01 \text{m/s}^2\sqrt{\text{s}}$ for the acceleration.
The pose measurement noise follows a Gaussian distribution with standard deviation $\diag\{0.4, 0.3, 0.2, 2.0, 1.0, 0.2\}$.
Note that the noise in the position measurement (last three entries) is highly inhomogeneous.
This choice emphasises the role of the geometric corrections in the simulations, since poorly corrected terms in the covariance update will cause more significant performance degradation than they would if the underlying noise distribution was homogeneous.

For comparison, we implement the original EKF as described in \cite{ge2023note}, the geometric EKF and the geometric ItEKF on the same system.
To evaluate the performance of the filters, we compare the estimation error in all three states and the ANEES \cite{li2011evaluation} of the filters.
The ANEES is defined as
\[
{\text{ANEES} = \frac{1}{nM}\sum_{i = 1}^{M}\varepsilon_{i}^T\mathbf{\Sigma}_{i}^{-1}\varepsilon_{i}},
\]
where ${\varepsilon}$ is the local state error, $\mathbf{\Sigma}$ is the error covariance, ${M = 1000}$ is the number of runs in the Monte-Carlo simulation, and $n=9$ is the dimension of the state space.

The ANEES provides a measure of the consistency of the filter estimate.
For a stochastically consistent filter, with no linearisation error and for perfectly Gaussian data, the ANEES should follow a $\chi$-squared distribution with expected value unity.
When the ANEES is larger than unity, it indicates that the filter is overconfident; that is, the observed error is larger than the estimate of the state covariance predicted.
This is usually due to linearisation error that is not modelled in the noise process but contributes to the observed error.
Conversely, if the ANEES is less than unity the observed error is smaller than predicted and the filter is under-confident.
If the noise processes are correctly modelled, it is expected that an EKF should be overconfident (since linearisation error will be percieved as unmodelled noise).
In real-world scenarios, an engineer will often overestimate the noise covariance for measurement processes to compensate for linearisation error and avoid overconfidence of the filter.
Such real-world considerations are not in the scope of this paper, and we focus on demonstrating the underlying performance properties of the filters considered.


\subsection{Results}
\begin{figure}[htb!]
    \centering
    \includegraphics[width=0.5\linewidth]{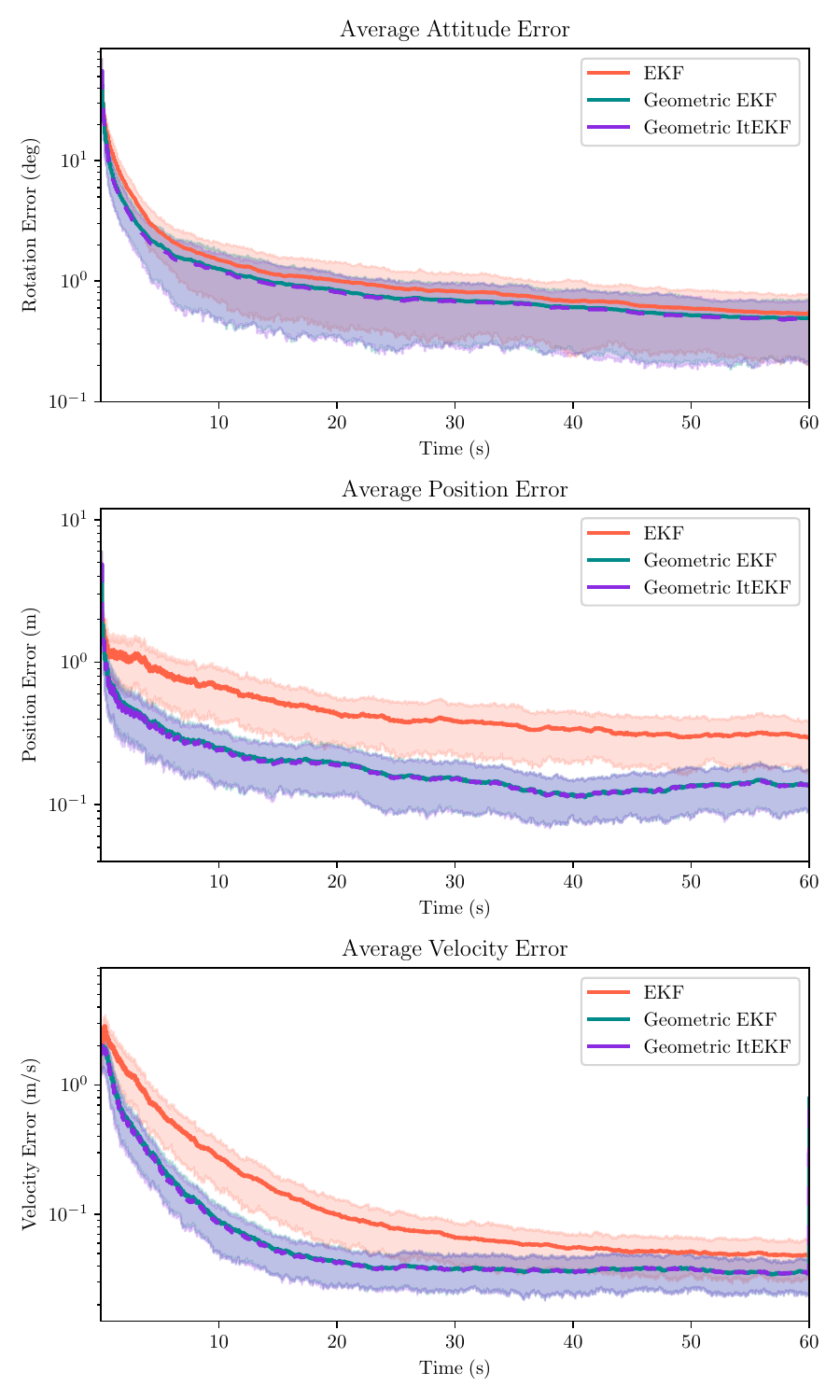}
    \caption{The estimation error are shown for different EKF implementations. The \textcolor{orange}{original EKF} is compared with the \textcolor{OliveGreen}{geometric EKF} and \textcolor{Purple}{ItEKF} proposed in this work. The shaded area represents the 25th and 75th percentiles of the estimation error.}
    \label{fig:result_error}
\end{figure}

\begin{figure}
    \centering
    \includegraphics[width=0.5\linewidth]{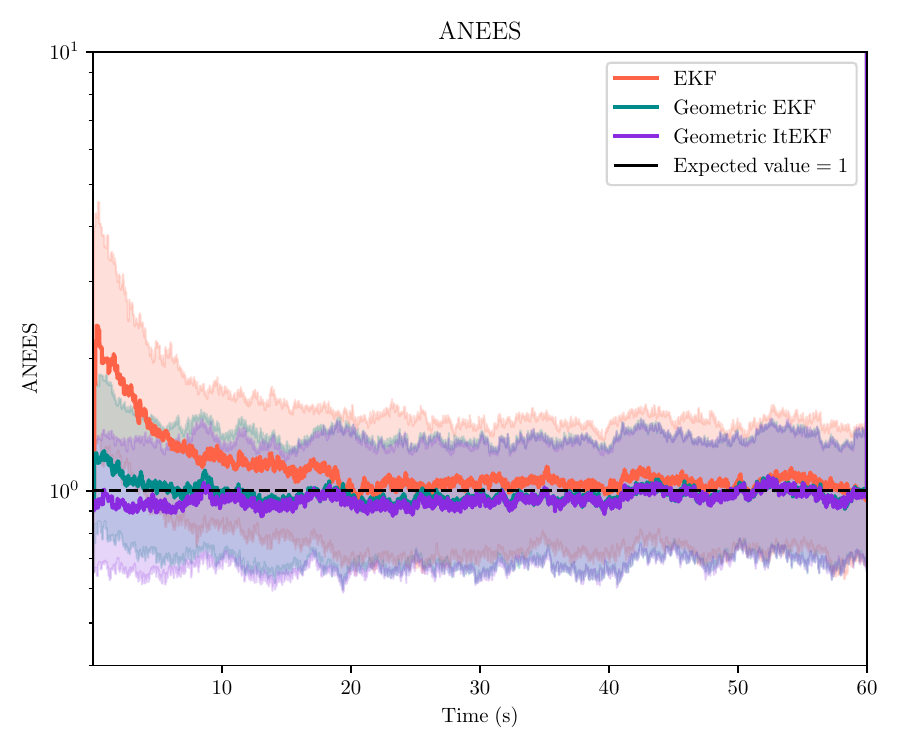}
    \caption{The ANEES of the different EKF implementations (\textcolor{orange}{original EKF}, \textcolor{OliveGreen}{geometric EKF} and \textcolor{Purple}{geometric ItEKF}) are shown. The \textcolor{black}{black dashed line} represents the ideal ANEES value of 1.}
    \label{fig:result_anees}
\end{figure}
\begin{table}[h]
    \centering
    \renewcommand{\arraystretch}{1.5}
    \begin{subtable}{\columnwidth}
        \centering
        \caption{RMSE in the transient Phase (0-30s)}
        \resizebox{0.5\columnwidth}{!}{ 
        \begin{tabular}{lccc}
            \toprule
            & \multicolumn{1}{c}{Rot. RMSE (deg)} & \multicolumn{1}{c}{Pos. RMSE (m)} & \multicolumn{1}{c}{Vel. RMSE (m/s)} \\
            \midrule
            EKF    & 5.0782 (100\%) & 0.7226 (100\%) & 0.6271 (100\%) \\
            Geometric EKF   & \underline{4.5332 (89.2\%)} & \underline{0.4168 (57.7\%)} & \underline{0.3602 (57.4\%)} \\
            \textbf{Geometric ItEKF} & \textbf{4.4634 (87.9\%)} & \textbf{0.4103 (56.8\%)} & \textbf{0.3493 (55.7\%)}\\

            \midrule

            ItEKF & 5.0642 (99.7\%) & 0.7146 (98.9\%) & 0.6108 (97.4\%) \\
            Geometric EKF (Update only) & 5.0243 (98.9\%) & 0.4804 (66.5\%) & 0.4868 (77.6\%) \\
            Geometric EKF (Reset only) & 8.8802 (174.8\%) & 1.6031 (221.9\%) & 1.5285 (243.8\%) \\
            Geometric ItEKF (Update only) & \textbf{4.4634 (87.9\%)} & \textbf{0.4103 (56.8\%)} & \textbf{0.3493 (55.7\%)} \\
            \bottomrule
        \end{tabular}
        }
    \end{subtable}

    \vspace{5pt} 

    \begin{subtable}{\columnwidth}
        \centering
        \caption{RMSE in the asymptotic Phase (30-60s)}
        \resizebox{0.5\columnwidth}{!}{ 
        \begin{tabular}{lccc}
            \toprule
            & \multicolumn{1}{c}{Rot. RMSE (deg)} & \multicolumn{1}{c}{Pos. RMSE (m)} & \multicolumn{1}{c}{Vel. RMSE (m/s)} \\
            \midrule
            EKF    & 0.6615 (100\%) & 0.3301 (100\%) & 0.0553 (100\%) \\
            Geometric EKF   & 0.5790 (87.5\%) & \underline{0.1327 (40.2\%)} & \underline{0.0382 (69.1\%)} \\
            \textbf{Geometric ItEKF} & \textbf{0.5673 (85.8\%)} & \textbf{0.1319 (40.0\%)} & \textbf{0.0381 (68.9\%)} \\
            \midrule
            ItEKF & 0.6616 (100.0\%) & 0.3288 (99.6\%) & 0.0551 (99.6\%) \\
            Geometric EKF (Update only) & 0.6739 (101.9\%) & 0.1391 (42.1\%) & 0.0393 (71.1\%) \\
            Geometric EKF (Reset only) & 1.5221 (230.1\%) & 0.4500 (136.4\%) & 0.0805 (145.6\%) \\
            Geometric ItEKF (Update only) & \underline{0.5674 (85.8\%)} & \textbf{0.1319 (40.0\%)} & \textbf{0.0381 (68.9\%)} \\
            \bottomrule
        \end{tabular}
        }
    \end{subtable}
    \vspace{5pt} 
    \caption{RMSE Comparison for different EKF implementations in the transient and asymptotic phases. The percentage indicates the improvement of the geometric filters with respect to the original EKF. \textbf{Bold} values indicate the best performance. \underline{Underlined} values indicate the second best performance.}
    \label{tab:rmse_comparison}
\end{table}

Figure~\ref{fig:result_error} shows the performance of the different EKF implementations, with and without the proposed geometric modifications.
To make the differences more visible, Table~\ref{tab:rmse_comparison} shows the average RMSE values of each filter in the transient (first 30s) and asymptotic (30-60s) phases of the trajectory.
The proposed geometric filters are seen to make a noticeable improvement to the filter performance throughout the trajectory, in both the transient and asymptotic phases.
Comparing to the original EKF, the geometric filters have much lower estimation error, especially in the position and velocity components.
The geometric correction terms are important in correctly aligning the measurement information when it is incorporated into the covariance estimate during the update step and the error in the position and velocity are emphasised in this example by the choice of highly inhomogeneous noise in the position measurement.
The geometric ItEKF further improves the estimation accuracy compared to the geometric EKF (cf.~Table~\ref{tab:rmse_comparison}), however, the advantage is less significant than may have been expected.
This is partly due to the simple output function \eqref{eq:sim_meas_func} that is linear in the state.
As a consequence there is no benefit in relinearisation of the output function and the geometric iterated EKF terminates on the first iteration.
There is still a slight advantage in the geometric ItEKF since the update is now undertaken with respect to a reference point $\xiop\in\calM$ that should be closer to the true state than $\hat{\xi}_{k+1|k}$, and the geometric correction terms are slightly more effective.
This is particularly true during the transient phase of the filter when there are large state errors.
Conversely, during the asymptotic phase, the geometric ItEKF and geometric EKF show the same performance.

Figure~\ref{fig:result_anees} shows the ANEES of the different EKF implementations.
The original EKF is significantly overconfident in the transient phase due to unmodelled linearisation error.
The geometric EKF improves the consistency of the filter, roughly halving the overconfidence demonstrating that a significant component of the linearisation error is due to the geometric structure of the space, not nonlinearity of the measurement function.
This should be expected in an example such as this, where the measurement function is linear as discussed earlier.
The geometric ItEKF further improves the performance, particularly in the transient phase, due to the better choice of reference point in the update.
As shown in Figure~\ref{fig:result_anees} and Table~\ref{tab:rmse_comparison}, although both geometric EKF and ItEKF have very similar RMSE values, the geometric ItEKF is more consistent in the transient phase, which is reflected in the ANEES values.

\subsection{Ablation Study}
In this section, we conduct an ablation study to evaluate the impact and importance of the `update' and `reset' geometric modifications in the EKF methodology.

In Figure~\ref{fig:ablation_error} (left hand column) and Figure~\ref{fig:ablation_anees} (first plot) we implement the geometric EKF with only the update modification, with only the reset modification, and with both modifications.
For comparison we also implement the standard EKF.

Interestingly, the geometric EKF with only the reset modification significantly degrades the estimation accuracy and consistency compared to the original EKF.
The geometric EKF with only the update modification shows improved performance compared to the basic EKF, however, it is only when both geometric modificaitons are combined that the full performance gain is realised.
Figure~\ref{fig:ablation_anees}, emphasises this performance difference, where the geometric EKF with only the update modification is even more overconfident than the original EKF.
This result indicates that the main benefit in considering the geoemtric perspective is gained in the update step, not the reset.
However, Figure~\ref{fig:ablation_anees} also demonstrates that leaving out the reset step leads to poor consistency of the filter, with the basic EKF outperforming both the partial geometric filters in ANEES.
In conclusion, both geometric modifications are important in obtaining high performance filters and they work synergistically not independently.

\begin{figure}[htb!]
    \centering
    \includegraphics[width=0.8\linewidth]{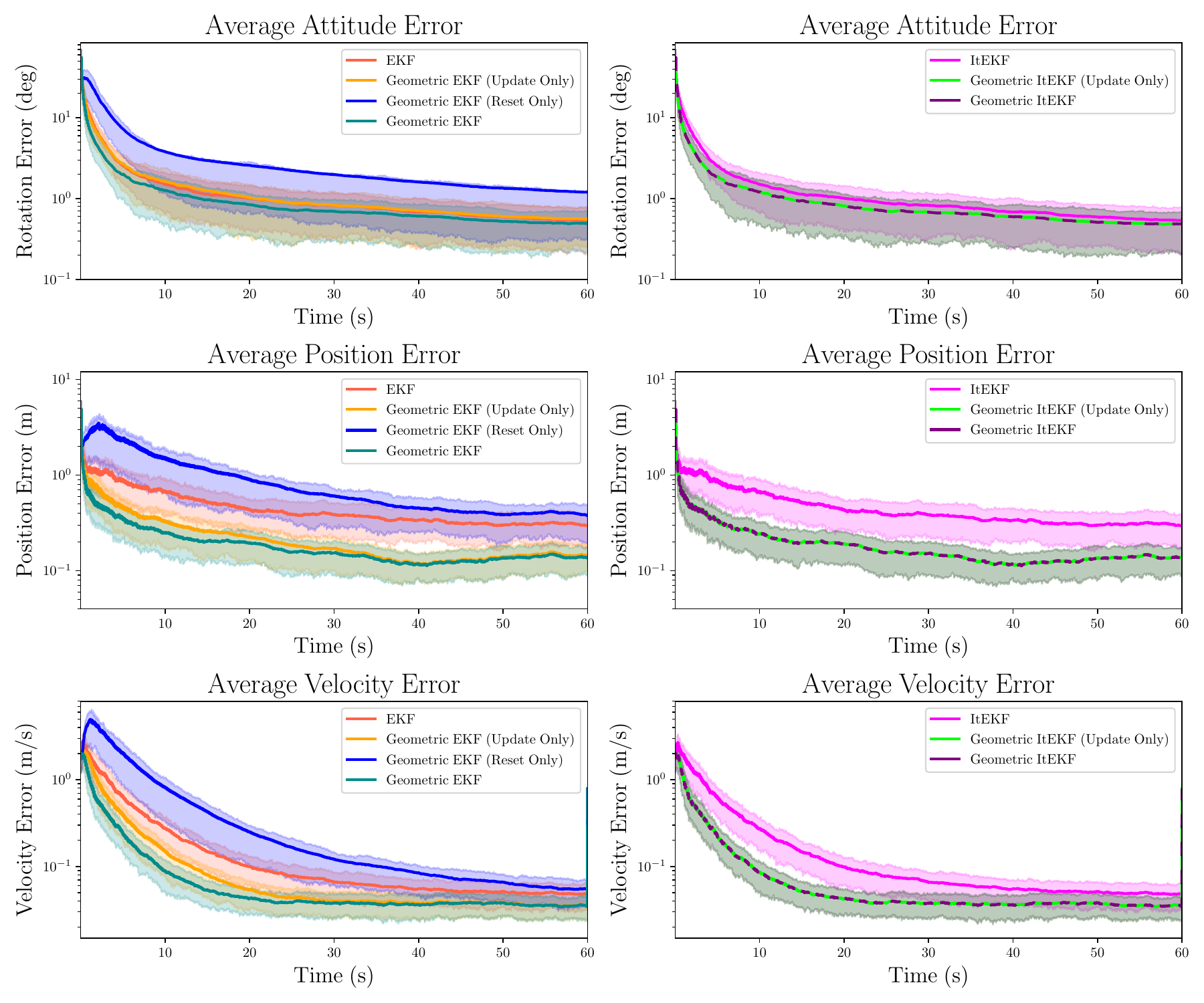}
    \caption{The estimation error is shown for different EKF implementations. On the left the non-iterated algorithms: The \textcolor{OrangeRed}{EKF}, the \textcolor{orange}{geometric EKF (update only)}, the \textcolor{blue}{geometric EKF (reset only)}, the \textcolor{OliveGreen}{full geometric EKF}.
    On the right the iterated algorithms: The \textcolor{Mulberry}{iterated EKF}, the
    \textcolor{LimeGreen}{geometric ItEKF (update only)}, and the the \textcolor{Purple}{full geometric EKF}.
    There is no reset in an iterated EKF as the update is computed using the new state as the reference point.
    The shaded area represents the 25th and 75th percentiles of the estimation error.}
    \label{fig:ablation_error}
\end{figure}

\begin{figure}[htb!]
    \centering
    \includegraphics[width=0.5\linewidth]{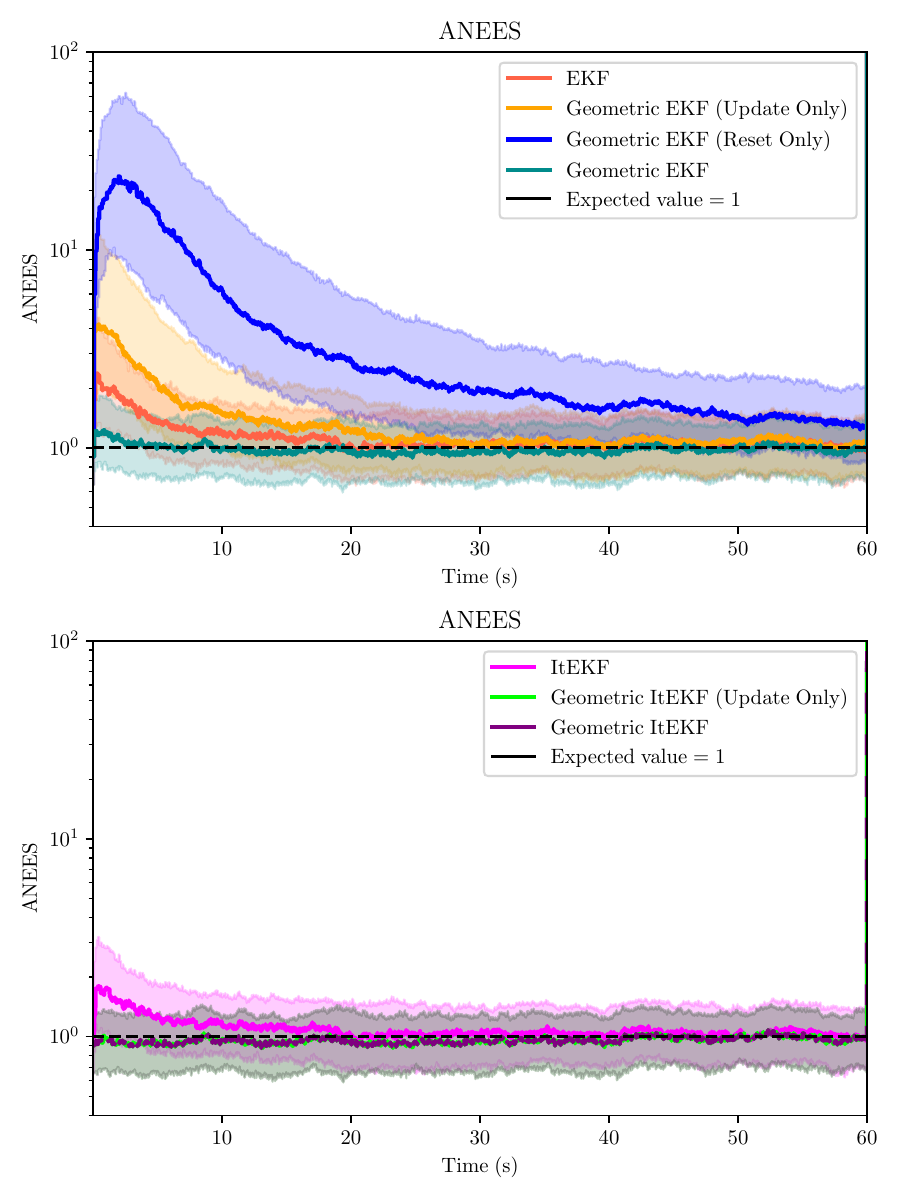}
    \caption{The ANEES of the different EKF implementations are shown. The black dashed line represents the ideal ANEES value of 1. }
    \label{fig:ablation_anees}
\end{figure}

In Figure~\ref{fig:ablation_error} (right hand column) and Figure~\ref{fig:ablation_anees} (second plot) we implement the iterated EKF, geometric iterated EKF with only the update modification and the full geometric iterated EKF.
The iterated EKF (ItEKF) is an implmentation of the ItEKF in \cite{bell1993iterated} computed in exponential coordinates.

For the iterated filters, interestingly, the geometric ItEKF with only the update modification provides almost identical performance to the full geometric ItEKF.
Note that including the reset step makes little difference to an iterated EKF since the reference point for the update is already the best estimate of the state estimate once the iteration has converged.
Hence the final update computation is already expressed in concentrated Gaussian around the final linearisation point and the reset does nothing.
In practice, the iteration is often terminated before the full convergence is obtained, however, the residual reset is negligable (see also \cite{ge2022equivariant})
The simulations clearly show that the geometric iterated EKF with only the update modification and the full geometric iterated EKF demonstrate the same performance in both error and ANEES.
Both algorithms outperform a classical iterated EKF implmentation.

The ablation study shows that both the update and reset modifications are essential to improve the performance of the EKF, while the reset step becomes less significant in the iterated EKF.

\section{Conclusion}
In this work, we exploit the geometric structure of the system spaces and present a concentrated Gaussian distribution model on smooth manifolds with affine connections.
An error-state extended Kalman filter design methodology is presented based on this model, which includes additional geometric modifications in the filter dynamics to model the coordinate change and curvature in the space.
The theory is applied to an example problem on the extended pose group $\SE_2(3)$ with pose measurements.
The simulation results show that the proposed methodology outperforms the standard EKF and the iterated EKF in terms of estimation accuracy and consistancy.

\appendix
\section{Appendix}
\subsection{Approximating partial pushforward maps of the exponential mapping}\label{sec:approximation_dexp}

The results in Lemma~\ref{lemma:standard_linearisation} as well as the filter update and reset steps rely on evaluating the partial pushforward maps of the exponential mapping \cite{lee2018introduction}.
The tangential partial push forward, where the base point of the exponential is fixed and the differential is computed with respect to varying the argument, is termed the Jacobian $\jt{}{\xi_1}{\xi_2}$ in the robotics literature.
For certain specific cases, such as the Lie-groups $\SO(3)$ and $\SE(3)$, the Jacobian can be written as an algebraic expression depending on common trigonometric functions \cite{Chirikjian_2011}.
These results hide the reality that the formula for this map on a general geometric manifold is a transcendental function that can only be numerically computed through an infinite series expansion \cite{helgason1979differential}.
Barfoot \etal \cite{barfoot2017state} proposes a numerical Taylor series from the exponential function on Lie-groups.
Chirikjian \cite{Chirikjian_2011} provides a detailed discussion of the Jacobian of the exponential and discusses the Campbell-Baker-Hausdorff formulae and how this can be used in filter updates \cite{wolfe2011bayesian}.
In \cite{ge2024geometric}, the authors propose approximating the Jacobian map on a Lie-group using parallel transport and curvature of the Cartan-Schouten symmetric (0)-connection.
In this section, we show that the same approach can be applied to geometric manifolds, and moreover, can be extended to approximate the positional partial pushforward map $\jp{}{\xi_1}{\xi_2}$,
where the argument of the exponential is fixed and the differential is computed with respect to varying the base point.

\subsubsection{Tangential partial pushforward map or Jacobian}
In the following section we distinguish strongly between the Jacobi field $\calJ_w(t)$ written in caligraphic script, and the Jacobian $\jt{}{\xi_1}{\xi_2}$ written in italic script.
Although they are closely related, the subscripts and arguments are different.
\begin{lemma}\label{lemma:ptcv}
    Let $\calM$ be a geometric manifold with symmetric affine connection $\nabla$.
    Given $\xi_1\in\calM$ and $\xi_2\in U_{\xi_1}$ with $\xi_2 = \vartheta_{\xi_1}(u)$, let $\gamma_v(t):=\vartheta_{\xi_1}(tv),\; t\in [0,1]$ denote the geodesic emanating from $\xi_1$ to $\xi_2$.
    Then the Jacobian map $\jt{}{\xi_1}{\xi_2}:\tT_{\xi_1}\calM\to\tT_{\xi_2}\calM$ satisfies
    \begin{align}\label{eq:ptcv}
        \jt{}{\xi_1}{\xi_2}[w] = \PT_{\gamma_v}^{0,1} \left(w +\frac{1}{6}\calR(v,w)v\right)+ \order(\lvert v \rvert^3),
    \end{align}
    for $w\in\tT_{\xi_1}\calM$.
\end{lemma}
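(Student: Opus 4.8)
The plan is to derive \eqref{eq:ptcv} by connecting the tangential partial pushforward map $\jt{}{\xi_1}{\xi_2} = \tD_v\vartheta^{-1}_{\xi_1}(u)$ to the theory of Jacobi fields along the geodesic $\gamma_v$, and then to use the standard Taylor expansion of a Jacobi field in terms of the initial data and curvature. First I would recall the classical fact (to be established in the Appendix material on Jacobi fields) that for a variation of geodesics $\alpha(s,t) = \vartheta_{\xi_1}^{-1}(t(v + s w))$ through $\xi_1$, the variation field $\calJ_w(t) := \partial_s|_{s=0}\alpha(s,t)$ is the unique Jacobi field along $\gamma_v$ with $\calJ_w(0) = 0$ and $\tfrac{\tD}{\dt}\calJ_w(0) = w$ (using $\imath$ to identify $\tT_v\tT_{\xi_1}\calM$ with $\tT_{\xi_1}\calM$). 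Evaluating at $t=1$ gives exactly $\calJ_w(1) = \tD_v\vartheta_{\xi_1}^{-1}(u)[w] = \jt{}{\xi_1}{\xi_2}[w]$, so the problem reduces to a Taylor expansion of $\calJ_w(1)$ in powers of $v$.

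The key step is then to expand the Jacobi field. Writing $\calJ_w(t)$ in terms of a parallel frame along $\gamma_v$ and using the Jacobi equation $\tfrac{\tD^2}{\dt^2}\calJ_w + \calR(\calJ_w,\dot\gamma_v)\dot\gamma_v = 0$ with $\dot\gamma_v(t)$ parallel (of ``size'' $v$), one obtains by Picard iteration
\begin{align*}
\calJ_w(t) = \PT^{0,t}_{\gamma_v}\left( t\, w - \frac{t^3}{6}\,\calR(w,v)v \right) + \order(t^3 \lvert v\rvert^3).
\end{align*}
Setting $t = 1$ and using the antisymmetry $\calR(w,v)v = -\calR(v,w)v$ yields \eqref{eq:ptcv} directly. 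I would be careful here to track that the curvature term genuinely enters at order $\lvert v\rvert^3$: the geodesic speed $\dot\gamma_v$ scales like $v$, so $\calR(\cdot,\dot\gamma_v)\dot\gamma_v$ contributes a factor $\lvert v\rvert^2$, and the double time-integral in the Picard step supplies the remaining $t^2 \to 1/6$ coefficient, while $\calJ_w$ itself is $O(\lvert w\rvert)$.

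The final step is bookkeeping: confirm that the remainder is genuinely $\order(\lvert v\rvert^3)$ rather than merely $\Order(\lvert v\rvert^4)$, which follows because the next term in the iteration involves $\nabla\calR$ and one more power of $v$, and verify that the symmetry (torsion-free) hypothesis on $\nabla$ is what legitimises identifying $\partial_s\partial_t\alpha$ with $\partial_t\partial_s\alpha$ so that $\calJ_w$ solves the Jacobi equation with the stated initial conditions. I expect the main obstacle to be presenting the Jacobi-field Taylor expansion cleanly enough to isolate the $\tfrac{1}{6}$ coefficient with an honest error estimate, since this requires either citing a precise lemma from the Appendix or carrying out the Picard iteration in a parallel frame and controlling the integral remainder; everything else is a routine translation between the exponential map's differential and Jacobi fields.
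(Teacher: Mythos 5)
Your proposal is correct and follows essentially the same route as the paper: identify $\jt{}{\xi_1}{\xi_2}[w]$ with the value at $t=1$ of the Jacobi field along $\gamma_v$ with initial data $\calJ_w(0)=0$, $\tD_t\calJ_w(0)=w$, pull back to $\tT_{\xi_1}\calM$ by parallel transport, and Taylor-expand to extract the $\tfrac{1}{6}\calR(v,w)v$ term. The only difference is presentational — you carry out the Picard iteration in a parallel frame explicitly, whereas the paper cites Lang for the Jacobi-field/Jacobian correspondence and Waldmann for the expansion.
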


The little `o'-order notation $\order(\lvert v \rvert^3)$ denotes that
    \[
    \lim_{v \rightarrow 0}\frac{1}{\lvert v \rvert^3} \left( \jt{}{\xi}{\gamma_v(1)}[w] - \PT_{\gamma_v}^{0,1}\left(w + \frac{1}{6}R(v,w)v \right) \right)  = 0
    \]
and is a coordinate independent property on $\tT_\xi \calM$.

\begin{proof}
    The Jacobi field $\mathcal{J}_w(t)$ is a vector field along $\gamma_v(t)$ with $J_w(t)\in\tT_{\gamma_v(t)}\calM$ defined as the solution of the Jacobi equation \cite{lee2018introduction}
    \begin{gather}
    \tD_t^2 \mathcal{J}_w(t)  = \calR(\dot{\gamma}_v(t), \mathcal{J}_w(t) ) \dot{\gamma}_v(t),
    \label{eq:Jacobi_ODE}
    \end{gather}
    where $\tD_t = \nabla_{\dot{\gamma}_v}$ is the covariant derivative along the curve $\gamma_v(t)$.
    Choosing initial conditions $\mathcal{J}_0(w) = 0$, and $\tD_t\mathcal{J}_0(w)= w$, then for $t\neq 0$, the Jacobian and Jacobi field satisfy \cite[Theorem 3.1]{lang2012fundamentals}
    \[
    \jt{}{\xi_1}{\gamma_v(t)}[w] = \frac{1}{t} \mathcal{J}_t(w).
    \]
    In order to study the Taylor series of the Jacobian, one needs to first identify $\tT_{\gamma_v(t)}\calM$ with $\tT_{\xi_1}\calM$ using parallel transport along $\gamma_v(t)$ in the reverse direction.
    Construct the following map
    \begin{align}\label{eq:jacobi_map}
        v \mapsto (\PT_{\gamma_v}^{0,1})^{-1}\left(\jt{}{\xi_1}{\gamma_v(t)}[w]\right),
    \end{align}
    which maps from $\tT_{\xi_1}\calM$ to $\tT_{\xi_1}\calM$.
    Studying the partial derivative of \eqref{eq:jacobi_map} in the $v$-variable is equivalent to taking the derivative of
    \[
    t\to(\PT_{\gamma}^{0,t})^{-1}\left(\frac{1}{t}\mathcal{J}_t(w)\right).
    \]
    at $t=0$.
    Applying \cite[Theorem A.2.9]{waldmann2012geometric} yields
    \begin{align*}
        \jt{}{\xi_1}{\gamma_v(1)}[w] = \PT_\gamma^{0,1} \left(w +\frac{1}{6}\calR(v,w)v\right)+ \order(\lvert v \rvert^3).
    \end{align*}
\end{proof}

\begin{remark}
Note that in \eqref{eq:ptcv}, both $\PT_{\gamma_v}^{0,1}(w)$ and $\calR(v,w)v$ are linear in $w$, hence the whole object $\PT_{\gamma_v}^{0,1}(w + \frac{1}{6}\calR(v,w)v)$ is a linear map of $w$, which makes this result compatible with Lemma~\ref*{lemma:standard_linearisation}.
\end{remark}

Lemma \ref{lemma:ptcv} provides a computationally feasible approach to approximate the Jacobian.
Since the error in \eqref{eq:ptcv} is $\order(\lvert v\lvert^3)$ then the third order term in the expansion is zero and the Taylors approximation is fourth order.
If computing the Riemannian curvature is numerically challenging then
    \begin{align}\label{eq:pt}
        \jt{}{\xi_1}{\xi_2}[w] = \PT_\gamma^{0,1} \left(w \right)+ \order(\lvert v \rvert),
    \end{align}
provides a second order Taylors approximation.
Of course, the parallel transport function itself may be difficult to compute and require its own numerical approximation.

On a Lie-group, the exponential is defined independently of the affine connection.
In order to apply Lemma~\ref{lemma:ptcv} one must choose the Cartan-Schouten (0)-connection; the unique symmetric affine connection who's geodesics are the 1-parameter subgroups or exponentials on the Lie-group.
It is of interest to consider the case of the (-) and (+) connections, who's geodesics also correspond to exponentials on the Lie-group.
These connections are not symmetric, that is they have non-zero torsion, although they are flat, that is they have zero curvature.
The terms in the Jacobi field expansion will be different, although the two infinite asymptotic expansions must converge in the limit, since the different expansions compute the same function.
Curtailing the Jacobi field expansion at the quadratic term for different affine connections leads to different approximations of the Jacobian.
In general, the fact that the third order term in the Jacobi field expansion is zero for a symmetric affine connection (and the effective Taylor series approximation is 4th order) leads to a better numerics than for a non-symmetric affine connection, as long as the advantage is not lost in the computing the parallel transport.
The parallel transport for the (0)-connection can be computed easily \cite{ge2024geometric}.

\subsubsection{Positional partial pushforward map}
The positional partial pushforward map $\jp{}{\xi_1}{\xi_2}$ can be approximated in a similar manner.

\begin{lemma}
Let $\calM$ be a geometric manifold with symmetric affine connection $\nabla$.
Given $\xi_1\in\calM$ and $\xi_2\in U_{\xi_1}$ with $\xi_2 = \vartheta_{\xi_1}(v)$, let $\gamma_v(t):=\vartheta_{\xi_1}(tv),\; t\in [0,1]$ denote the geodesic emanating from $\xi_1$ to $\xi_2$.
Then the partial pushforward map of the exponential map with respect to the positional component $\jp{}{\xi_1}{\xi_2}:\tT_{\xi_1}\calM\to\tT_{\xi_2}\calM$ satisfies
\begin{align}\label{eq:positional_ptcv}
\jp{}{\xi_1}{\xi_2}[w] = \PT_{\gamma_v}^{0,1} \left(w -\frac{1}{2}\calR(v,w)v\right)+ \order(\lvert v \rvert^3),
\end{align}
for $w\in\tT_{\xi_1}\calM$.
\end{lemma}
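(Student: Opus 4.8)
The plan is to mirror the proof of Lemma~\ref{lemma:ptcv}: the positional pushforward is again captured by a Jacobi field along $\gamma_v$, only with different initial data. Keeping $\xi_1$ and $v$ fixed, perturb the base point along the geodesic $c(t) := \vartheta^{-1}_{\xi_1}(tw)$, so that $c(0) = \xi_1$ and $\dot c(0) = w$, and consider the two-parameter map
\[
\Gamma(t,s) := \vartheta^{-1}_{c(t)}(sv) = \Exp_{c(t)}\!\big( s\, \imath_{c(t)}^{-1}(v) \big),
\qquad (t,s) \text{ near } (0,0).
\]
For each fixed $t$ the curve $s\mapsto\Gamma(t,s)$ is a geodesic, $\Gamma(0,s)=\gamma_v(s)$, and $\left.\ddt\right|_{t=0}\Gamma(t,1) = \jp{}{\xi_1}{\xi_2}[w]$ by the definition of the positional partial pushforward, so the quantity of interest is the value at $s=1$ of the variation field of this geodesic family.

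First I would observe that $J(s) := \left.\partial_t\right|_{t=0}\Gamma(t,s)$ is a Jacobi field along $\gamma_v$, i.e.\ it solves \eqref{eq:Jacobi_ODE}, since it is the variation field of a one-parameter family of geodesics. The essential difference from the tangential case is the initial data: perturbing the \emph{base point} rather than the tangential argument gives $J(0) = \dot c(0) = w$ (instead of $J(0)=0$), while the symmetry lemma $\tD_s\partial_t\Gamma = \tD_t\partial_s\Gamma$ (valid because $\nabla$ is torsion free) identifies $\tD_s J(0)$ with the covariant derivative along $c$, at $t=0$, of the frame-dependent vector field $t\mapsto\imath_{c(t)}^{-1}(v)$. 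With this, $\jp{}{\xi_1}{\xi_2}[w] = J(1)$.

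Next I would solve the Jacobi equation asymptotically in $|v|$ exactly as in Lemma~\ref{lemma:ptcv}: transport $J$ back to $\tT_{\xi_1}\calM$ via $(\PT_{\gamma_v}^{0,s})^{-1}$, Taylor-expand $s\mapsto(\PT_{\gamma_v}^{0,s})^{-1}J(s)$ at $s=0$ using $\tD_s^2 J(0) = \calR(v,J(0))v$, and invoke \cite[Theorem A.2.9]{waldmann2012geometric} to package the expansion. As in \eqref{eq:ptcv}, the cubic term vanishes because $\nabla$ is symmetric, so the effective approximation is fourth order and the remainder is genuinely $\order(|v|^3)$ in the coordinate-independent sense noted after Lemma~\ref{lemma:ptcv}. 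Collecting terms gives $J(1) = \PT_{\gamma_v}^{0,1}\big( w + (\text{curvature correction}) \big) + \order(|v|^3)$, which is \eqref{eq:positional_ptcv}.

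I expect the main obstacle to be correctly identifying $\tD_s J(0)$ and then combining the two sources of curvature. The initial covariant derivative of the coordinate frame $t\mapsto\imath_{c(t)}^{-1}(v)$ already contributes a term of order $|v|^2$, which must be added to the $\tfrac12\,\calR(\dot\gamma_v(0),J(0))\dot\gamma_v(0)$ term produced by the generic second-order Jacobi-field expansion; keeping careful track of both, together with the sign conventions for $\calR$ and for the direction of $\PT_{\gamma_v}^{0,1}$, is what yields the coefficient $-\tfrac12$ in \eqref{eq:positional_ptcv}. A convenient consistency check is the matrix-Lie-group case with the Cartan--Schouten $(0)$-connection and the frame induced by parallel transport, where $\PT_{\gamma_v}^{0,1}$, the curvature $\calR(v,w)v$, and $\jp{}{\xi_1}{\xi_2}[w]$ can all be written in closed form in terms of $\ad_v$ and the group product, making the expansion fully explicit.
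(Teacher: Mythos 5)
Your overall strategy is the same as the paper's: vary the base point along a curve through $\xi_1$ with velocity $w$, observe that the variation field of the resulting one-parameter family of geodesics is a Jacobi field $J$ along $\gamma_v$ with $J(0)=w$, and read off $\jp{}{\xi_1}{\xi_2}[w]=J(1)$ from a second-order expansion of the parallel-transported field. Where the two arguments part ways --- and where your sketch has a genuine gap --- is in the initial condition $\tD_s J(0)$, which is precisely the step that determines the coefficient of the curvature term. You take the family $\Gamma(t,s)=\Exp_{c(t)}\big(s\,\imath_{c(t)}^{-1}(v)\big)$, so that by the symmetry lemma $\tD_sJ(0)=\nabla_w\big(\imath_{c(\cdot)}^{-1}(v)\big)$. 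This quantity is \emph{linear} in $v$ (each $\imath_{c(t)}^{-1}$ is a linear map), so for a generic choice of the frame $\imath$ it is a nonzero $\Order(\lvert v\rvert)$ term, not $\Order(\lvert v\rvert^2)$ as you assert; it would then dominate the $\calR(v,w)v$ correction entirely and the claimed formula could not hold. Your final step --- that this frame term ``combines'' with the generic $\tfrac12\,\calR(v,w)v$ from the Jacobi expansion to produce the coefficient $-\tfrac12$ --- would require the identity $\nabla_w\big(\imath^{-1}(v)\big)=-\calR(v,w)v+\order(\lvert v\rvert^3)$, which is a strong, frame-specific statement that you neither formulate nor prove (and which fails, for instance, for a frame that is parallel along $c$ at $t=0$, where the term is identically zero).

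The paper resolves this by making a definite choice: it extends $v$ off the base point by \emph{parallel transport} along the base curve, $\nabla_{\dot x(s)}v(s)=0$, which together with the symmetry of $\nabla$ forces $\tD_t\calJ_0(0)=0$, so that the entire curvature correction comes from the second-order term $\calR(v,w)v$ of the Jacobi ODE, integrated to a factor $\tfrac{t^2}{2}$ at $t=1$. The remark following the paper's proof makes exactly the point your sketch glosses over: a different extension of $v$ produces an extra $\nabla_w v$ term in the final formula, so the result is only unambiguous once the extension (equivalently, the behaviour of the frame $\imath$ to first order at $\xi_1$) is fixed. To repair your argument, either adopt the parallel extension explicitly, or prove the required identity for $\nabla_w\big(\imath^{-1}(v)\big)$ for your chosen frame; as written, the coefficient $-\tfrac12$ is asserted rather than derived.
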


\begin{proof}
The proof constructs a family of geodesics and then applies Jacobi field theory.

Choose a smooth curve $x(s)$ in $\calM$ passing through $\xi_1$ with velocity $w$:
\[
x(0) = \xi_1, \quad \dot{x}(0) = w.
\]
Define a vector field $v(s)$ along $x(s)$ by the parallel transport of $v$.
That is
\begin{align}\label{eq:parallel_vector_field}
    \nabla_{\dot{x}(s)}v(s) = 0, \qquad v(0) = v.
\end{align}
Define a family of geodesics
\[
\gamma^s(t) := \vartheta^{-1}_{x(s)}(tv(s)).
\]
For each fixed $s$, the curve $\gamma^s(t)$ is a geodesic emanating from $x(s)$ with initial velocity $v(s)$.
Note that when $s=0$, we recover the original geodesic $\gamma^0(t)=\gamma_v(t) = \vartheta^{-1}_{\xi_1}(tv)$.

The first variation of $\Gamma(s,t)$ with respect to $s$
\[
\calJ_s(t) = \frac{\partial}{\partial s}\gamma^s(t)
\]
is a Jacobi field along the geodesic $\gamma^s(t)$ \cite{lee2018introduction}.
To solve for $\calJ_s(t)$ using \eqref{eq:Jacobi_ODE} one needs to identify appropriate initial conditions at $t=0$.
From $\gamma^s(0) = x(s)$ then
\[
\calJ_0 (0) = \dds\bigg|_{s=0}\gamma^s(0)  = \dds\bigg|_{s=0}x(s) = w.
\]
At $t=0$, then $\dot{\gamma}^s(0) = v(s)$.
Hence one has
\begin{align*}
\tD_t\calJ_0(t) \bigg|_{t=0}
&= \tD_t\bigg|_{t=0} \frac{\partial}{\partial s}\bigg|_{s=0} \gamma^s(t) \\
& = \tD_s \bigg|_{s=0} \frac{\partial}{\partial t} \bigg|_{t=0} \gamma^s(t) \\
& = \tD_s \bigg|_{s=0} v(s) = 0
\end{align*}
where the second line follows since the affine connection is symmetric and the Lie-bracket of coordinate differentials is zero $[\frac{\partial}{\partial t} , \frac{\partial}{\partial s}] = 0$, and the final equality follows since $v(s)$ is a parallel vector field along $x(s)$.
In conclusion, the Jacobi field $\calJ_s(t)$ has the initial conditions
\[
\calJ_0(0) = w,\quad \tD_t \calJ_0(0) = 0.
\]
To compute the positional partial pushforward map, we need to compute the Jacobi field at $t=1$, that is,
\[
\jp{}{\xi_1}{\xi_2}[w] = \calJ_0(1).
\]

A similar approximation scheme as in Lemma~\ref{lemma:ptcv} can be applied to the Jacobi field $\calJ_s(t)$.
Define a vector $K(t)\in\tT_{\xi_1}\calM$ by parallel transporting $\calJ_0(t)$ back to the initial point, that is,
\[
K(t): = {\PT_{\gamma_v}^{0,t}}^{-1}\left(\calJ_s(t)\right) = \PT_{\gamma_v}^{t,0}\left(\calJ_s(t)\right).
\]
One can then rewrite the Jacobi equation in terms of $K(t)$:
\[
\frac{\td^2}{\mathrm{dt}^2} K(t) = -\PT_{\gamma_v}^{t,0} \left(\calR(\dot{\gamma}(t),\PT_{\gamma}^{0,t}(K(t))\dot{\gamma}(t)\right).
\]
as a second order ODE on $\tT_{\xi_1} \calM$, a Euclidean space.
Hence $K$ satisfies a second-order ODE that integrates up the curvature along the path with initial conditions
\[
K(0) = \calJ_0(0) = w,
\qquad
\ddt\bigg|_{t=0} K(t) = 0.
\]
One can then show by Taylor expansion that for small $t$,
\[
K(t) = w - \frac{t^2}{2}\calR(v,w)v + \order(\lvert t \rvert^3),
\]
Assuming $t$ is small is equivalent to assuming $v$ small, the result can be rewritten in terms of $v$ and evaluated at $t=1$
\[
\jp{}{\xi_1}{\xi_2}[w] =\calJ_0(1) =  \PT_{\gamma_v}^{0,1} \left(w -\frac{1}{2}\calR(v,w)v\right)+ \order(\lvert v \rvert^3).
\]
This completes the proof.
\end{proof}

\begin{remark}
    In the proof, we have chosen to fix the vector $v$ by defining a parallel vector field $v(s)$ along the curve $x(s)$, satisfying \eqref{eq:parallel_vector_field}.
    This leads to the initial condition of the Jacobi field $\tD_t \calJ_0(0) = 0$.
    However, one may choose to fix the vector $v$ in a different manner, leading to a general initial condition $\tD_t \calJ_0(0) = \nabla_w v$, following from differentiating $\frac{\partial}{\partial t}\gamma^s(0)=v$ with respect to $s$ at $s=0$..
    In this case, the final result will be
    \[
    \jp{}{\xi_1}{\xi_2}[w] = \PT_{\gamma_v}^{0,1} \left(w + \nabla_w v-\frac{1}{2}\calR(v,w)v\right)+ \order(\lvert v \rvert^3).
    \]
    In this proof, we have chosen $v$ to be unchanged along the geodesic via parallel transport, hence $\nabla_w v = 0$.
\end{remark}

\bibliography{reference}
\bibliographystyle{IEEEtran}

\clearpage

\end{document}